\def\rD{{\rm D}}
\def\hk{{\hat{k}}}
\def\cK{\mathcal{K}}
\def\sL{\mathscr{L}}
\def\Dt#1{\accentset{\hbox{\large.}}{#1}}
\def\DDt#1{\accentset{\hbox{\large.\kern-1pt.}}{#1}}
\def\dt#1{\accentset{\hbox{\normalsize.}}{#1}}
\def\ddt#1{\accentset{\hbox{\normalsize\kern.5pt.\kern-1pt.}}{#1}}
\def\vdt{\partial_\tau}
\def\ivt#1{{\buildrel{\makebox[0pt]{\,.\kern-1pt.\kern-1pt.\kern-1pt.}}\over{#1}}}
\definecolor{Green}  {rgb}{0.10,0.70,0.10} 
\definecolor{Orange} {rgb}{1.00,0.50,0.15} 
\definecolor{Red}    {rgb}{0.90,0.00,0.12} 
\definecolor{Purple} {rgb}{0.42,0.15,0.45} 
\definecolor{Turque} {rgb}{0.00,0.65,0.85} 
\definecolor{Blue}   {rgb}{0.00,0.00,1.00} 
\definecolor{Magenta}{rgb}{1.00,0.00,1.00} 
\definecolor{Gold}   {rgb}{1.00,0.75,0.25} 
\definecolor{Seaweed}{rgb}{0.01,0.24,0.09} 
\definecolor{Brown}  {rgb}{0.43,0.26,0.32} 
\definecolor{grey1}  {rgb}{0.20,0.20,0.20} 
\definecolor{grey2}  {rgb}{0.40,0.40,0.40} 
\definecolor{grey3}  {rgb}{0.60,0.60,0.60} 
\definecolor{grey4}  {rgb}{0.80,0.80,0.80} 
\definecolor{grey5}  {rgb}{0.90,0.90,0.90} 
\def\C#1#2{{\ifcase#1\or
             \color{Green}\or \color{Orange}\or \color{Red}\or
              \color{Purple}\or \color{Turque}\or \color{Blue}\or
               \color{Magenta}\or \color{Gold}\or \color{Seaweed}\or
                \color{Brown}\or\color{grey1}\or\color{grey2}\or
                 \color{grey3}\else\color{grey4}\fi#2}}
\def\cb#1#2{\setlength\fboxsep{1pt}\colorbox{#1}{\color{#1}\fbox{\color{black}#2}}}
\def\cB#1{\hbox to0pt{\setlength\fboxsep{0pt}\hss\color{grey3}\fbox{\cb{white}{#1}}\hss}}
\def\bB#1{\hbox to0pt{\setlength\fboxsep{0pt}\hss\color{grey3}\fbox{\cb{black}{\color{white}#1}}\hss}}
\def\vdt{\vd_\tau}
\begin{document}

\thispagestyle{empty}
 \noindent
 \today\hfill
  \vspace*{5mm}
 \begin{center}
{\LARGE\sf\bfseries Golden Ratio Controlled Chaos\\[2mm]
                     in Supersymmetric Dynamics}\\[1mm]
  \vspace*{5mm}
 \begin{tabular}{p{80mm}cp{80mm}}
 \hfill
{\large\sf\bfseries  Tristan H\"{u}bsch$^{*\dag}$}
                    &and&
{\large\sf\bfseries  Gregory A.~Katona$^{\dag\ddag}$}\\[1mm]
\MC3c{\small\it
  $^*$\,Department of Physics \&\ Astronomy,
  Howard University, Washington, DC 20059} \\[-1mm]
\MC3c{\small\it
  $^\dag$\,Department of Physics, University
  of Central Florida, Orlando, FL 32816}\\[-1mm]
\MC3c{\small\it
  $^\ddag$\,Affine Connections, LLC, College Park, MD 20740}\\[0mm]
 \hfill {\tt  thubsch@howard.edu}&&{\tt grgktn@knights.ucf.edu}
  \end{tabular}\\[1mm]
  \vspace*{5mm}
{\sf\bfseries ABSTRACT}\\[2mm]
\parbox{148mm}{We construct supersymmetric Lagrangians for the recently constructed off-shell worldline $N\,{=}\,3$ supermultiplet $\IY_I/(i\rD_I\IX)$ for $I=1,2,3$, where $\IY_I$ and $\IX$ are standard, Salam-Strathdee superfields: $\IY_I$ fermionic and $\IX$ bosonic. Already the Lagrangian bilinear in component fields exhibits a total of thirteen free parameters, seven of which specify Zeeman-like coupling to external (magnetic) fluxes. All but special subsets of this parameter space describe aperiodic oscillatory response, some of which are controlled by the ``golden ratio,'' $\vf\approx1.61803$. We also show that all of these Lagrangians admit an $N\,{=}\,3\to4$ supersymmetry extension, while a subset admits two inequivalent such extensions.
 } 
\end{center}
\vspace{5mm}
\noindent
\parbox[t]{60mm}{PACS: {\tt11.30.Pb}, {\tt12.60.Jv}}\hfill
\parbox[t]{100mm}{\raggedleft\small\baselineskip=12pt\sl
             The elevator to success is out of order.\\[-1pt]
             You'll have to use the stairs, one step at a time.\\[-0pt]
            |~Joe Girard}
\vspace{5mm}

\section{Introduction, Results and Synopsis}
\label{s:IRS}
Systematic construction and exploration of off-shell supermultiplets has received a significant boost from focusing on dimensional reduction to worldline $N$-extended supersymmetry (see \eg, Refs.\cite{rGR1,rGR2,rPT,rGLPR,rGLP,rA,rKRT,r6-1,rKT07,r6--1,r6-1.2,r6-3.1,rTHGK12,rGIKT12}). The dimensional extension of these to higher-dimensional spacetimes is being actively investigated\cite{rFIL,rFL,rGH-obs,rH-WWS,rJP-DBt}, and has already produced new off-shell supermultiplets in higher-dimensional spacetime, such as finite-dimensional off-shell realizations of the Fayet ``hypermultiplet''\cite{r6-4}, which in turn is crucial in studies of $T$-duality in superstring and similar theories\cite{rMMYau1,rMMYau2,rMMYau3}. 
 In addition,
 the underlying theoretical framework for $M$-theory includes worldline supersymmetry in a prominent way.
 Finally, the Hilbert space of every supersymmetric field theory necessarily admits the action of an induced worldline supersymmetry.
 Consistent quantum formulation of any such theory requires off-shell fields for path integration, and this provides the fundamental motivation for exploring the structure of off-shell representations of $N$-extended worldline supersymmetry.

Resolving a puzzle from Ref.\cite{r6-1}, we have proven\cite{rTHGK12} that a semi-infinite iterative sequence of off-shell supermultiplet quotients produces infinitely many ever larger indecomposable off-shell supermultiplets of worldline $N\,{\geqslant}\,3$ supersymmetry. Herein, we focus on the smallest of these novel, indecomposable off-shell supermultiplets, for $N\,{=}\,3$ supersymmetry and with $5{+}8{+}3$ component fields $(\f_i|\j_\hk|F_m)$, and the supersymmetry transformation rules specified (adapting the notation from Ref.\cite{rTHGK12}) as
\begin{equation}
  \begin{array}{@{} c|c@{~~}c@{~~}c @{}}
 & \C3{Q_1} & \C1{Q_2} & \C6{Q_3} \\ 
    \hline\rule{0pt}{2.1ex}
\f_1
 & \j_1 & \j_2 & \j_3 \\[0pt]
\f_2
 & \j_3 & -\j_4 & -\j_1 \\[0pt]
\f_3
 & \j_4\C5{{-}\j_7} & \j_3\C5{{-}\j_5}
   & -\j_2\C5{{+}\j_6} \\[-2pt]
\f_4
 & \j_5 & -\j_7 & -\j_8 \\[0pt]
\f_5
 & -\j_6 & \j_8 & -\j_7 \\*[3pt]
    \cline{2-4}\rule{0pt}{2.5ex}
F_1
 & \dt\j_2 & -\dt\j_1 & \dt\j_4 \\[0pt]
F_2
 & \dt\j_8 & \dt\j_6 & \dt\j_5 \\[0pt]
F_3
 & \dt\j_7 & \dt\j_5 & -\dt\j_6 \\
    \hline
  \end{array}
 \qquad
  \begin{array}{@{} c|c@{~~}c@{~~}c @{}}
 & \C3{Q_1} & \C1{Q_2} & \C6{Q_3} \\ 
    \hline\rule{0pt}{2.9ex}
\j_1
 & i\dt\f_1 & -iF_1 & -i\dt\f_2 \\[0pt]
\j_2
 & iF_1 & i\dt\f_1 & -i\dt\f_3\C5{{-}iF_3} \\[0pt]
\j_3
 & i\dt\f_2 & i\dt\f_3\C5{{+}iF_3} & i\dt\f_1 \\[0pt]
\j_4
 & i\dt\f_3\C5{{+}iF_3} & -i\dt\f_2 & iF_1 \\[0pt]
\j_5
 & i\dt\f_4 & iF_3 & iF_2 \\[0pt]
\j_6
 & -i\dt\f_5 & iF_2 & -iF_3 \\[0pt]
\j_7
 & iF_3 & -i\dt\f_4 & -i\dt\f_5 \\[0pt]
\j_8
 & iF_2 & i\dt\f_5 & -i\dt\f_4 \\[0pt]
    \hline
  \end{array}
  \label{e:GKY}
\end{equation}
which may be depicted by the graph in Figure~\ref{f:A>B}: component fields are depicted by like-labeled edges and the $Q$-transformation between them are depicted by edges, solid (dashed) for positive (negative) signs in\eq{e:GKY}. Edge colors distinguish between
 $\C3{Q_1}$, $\C1{Q_2}$ and $\C6{Q_3}$, and the tapered edges indicate a one-way $Q$-action; \eg, $\C3{Q_1}(\j_4)\supset F_3$ but $\C3{Q_1}(F_3)\not\supset\j_4$.
\begin{figure}[ht]
\centering
 \begin{picture}(140,45)
   \put(0,0){\includegraphics[width=140mm]{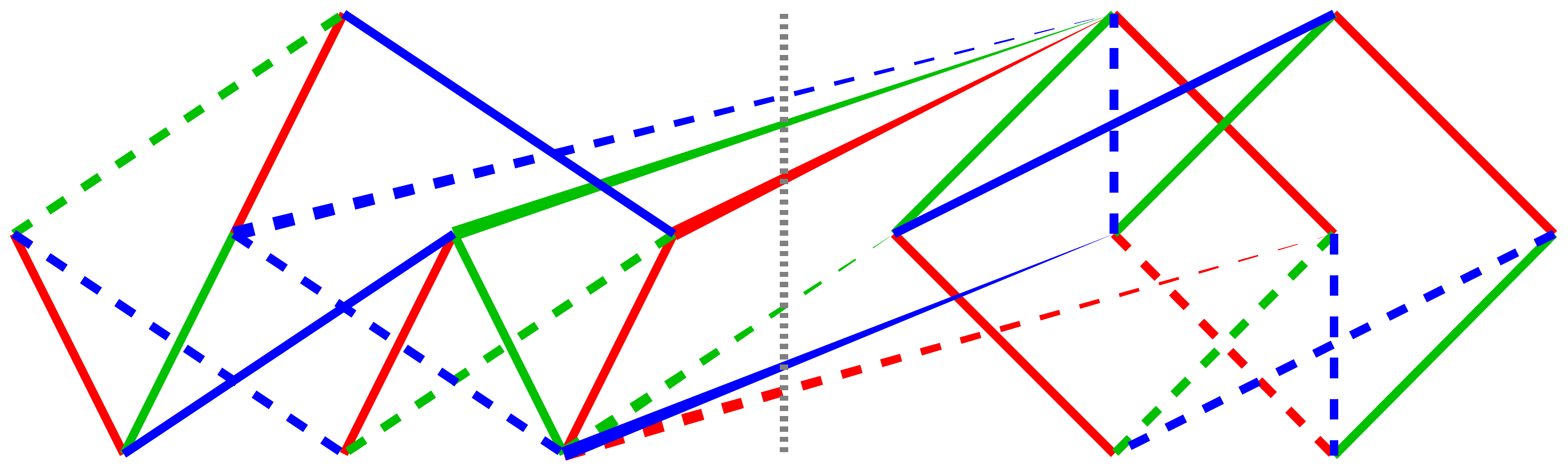}}
    \put(12,1){\cB{$\f_1$}}
    \put(31,1){\cB{$\f_2$}}
    \put(50.5,1){\cB{$\f_3$}}
    \put(99.5,1){\cB{$\f_4$}}
    \put(119,1){\cB{$\f_5$}}
    \put(2,20){\bB{$\j_1$}}
    \put(21,20){\bB{$\j_2$}}
    \put(40,20){\bB{$\j_3$}}
    \put(59,20){\bB{$\j_4$}}
    \put(80,20){\bB{$\j_5$}}
    \put(99,20){\bB{$\j_6$}}
    \put(118,20){\bB{$\j_7$}}
    \put(137,20){\bB{$\j_8$}}
    \put(31,40){\cB{$F_1$}}
    \put(99.5,40){\cB{$F_3$}}
    \put(119,40){\cB{$F_2$}}
 \end{picture}
\caption{A graphical depiction of the gauge-quotient supermultiplet of Ref.\protect\cite{rTHGK12}.}
 \label{f:A>B}
\end{figure}
The supermultiplet\eq{e:GKY} exhibits no obvious conventional symmetry nor does it support a complex structure, which helps identifying features that are dictated by supersymmetry itself, unencumbered by consequences of any other symmetry or structure. Ref.\cite{rTHGK12} showed that this supermultiplet cannot be {\em\/decomposed\/} into a direct sum of smaller supermultiplets, although it may be {\em\/reduced\/} by setting one half of it to zero---but only the fields in the right-hand half of Figure~\ref{f:A>B}. This type of {\em\/indecomposable reducibility\/}\ft{This is radically different from representations of Lie algebras, where reduction implies decomposition as a direct sum. Infinitely many off-shell supermultiplets of which\eq{e:GKY} is but the simplest example may be reduced by constraining some portion to zero, but do not decompose as a direct sum. Such representations are therefore definitely ``more than the sum of their parts.''} turns out to be ubiquitous in the indefinite sequence of ever-larger supermultiplets of Ref.\cite{rTHGK12}, and persists to higher-dimensional spacetimes\cite{r6-4.2}.

Throughout this article, we focus on the worldline $N$-extended supersymmetry algebra with no central charge is
\begin{equation}
 \left.
 \begin{aligned}
 \big\{\, Q_I \,,\, Q_J \,\big\}&=2\d_{IJ}\,H,&
 \big[\, H \,,\, Q_I \,\big] &=0,\\
 Q_I^{~\dagger}&= Q_I,& H^\dag&=H,\quad
\end{aligned}\right\}\quad
 I,J=1,2,\cdots,N,
  \label{e:SuSyQ}
\end{equation}
were $H\,{=}\,i\vdt$. In Section~\ref{s:SSL}, we present a 13-parameter family of bilinear supersymmetric Lagrangians for the supermultiplet\eq{e:GKY}. A 7-parameter subset describes Zeeman-like coupling to external magnetic fluxes, which includes regimes of chaotic response.
Section~\ref{s:MSI} generalizes this to an infinite-dimensional family of interactive (non-bilinear) Lagrangians, all of which are supersymmetric by explicit construction. Section~\ref{s:SuSyX} then explores enhancing worldline supersymmetry from $N\,{=}\,3\to4$ and its effects on the possible choices of a Lagrangian. Our closing comments are collected in Section~\ref{s:Coda}; technical details and derivations are deferred to the appendix~\ref{s:Bits}.

\section{Manifestly Supersymmetric Lagrangians}
\label{s:SSL}
This section presents a class of Lagrangians for the supermultiplet\eq{e:GKY} which are supersymmetric by construction, akin to the ones obtained by standard superfield methods\cite{r1001,rPW,rWB,rBK}; see in particular Ref.\cite{rFGH}.

\subsection{Manifestly Supersymmetric Kinetic Terms}
\label{s:KE}
Consider the quantity
\begin{equation}
  \sL_{\sss\vec{A}}^{\sss\text{KE}} \Defl \C6{Q_3}\C1{Q_2}\C3{Q_1}\,A^{i\hk}\f_i\j_\hk
 \label{e:L0}
\end{equation}
with $A^{i\hk}$ an arbitrary $5\,{\times}\,8=40$ matrix of coefficients. It is manifestly supersymmetric regardless of the choice of the coefficients $A^{i\hk}$, since $Q_I\sL_{\sss\vec{A}}^{\sss\text{KE}}$ is a total time-derivative for each $I=1,2,3$:
\begin{equation}
  Q_I\big[\C6{Q_3}\C1{Q_2}\C3{Q_1}\,A^{i\hk}\f_i\j_\hk\big]
  =\vdt\Big[(-1)^{I+1}i\prod_{J\neq I}Q_J\,A^{i\hk}\f_i\j_\hk\Big].
 \label{e:QL0}
\end{equation}
Being true for any choice of $A^{i\hk}$, this construction seems to provide a 40-parameter family of manifestly supersymmetric Lagrangian terms, analogous to the so-called ``$D$-terms'' within the standard constructions with simple supersymmetry in 4-dimensional spacetime\cite{r1001,rPW,rWB,rBK}.

 The Lagrangian terms\eq{e:L0} are dimensionally adequate for kinetic terms in a Lagrangian if we choose the standard mass-dimension for the component fields:
\begin{equation}
  [\f_i]      =-\frc12\quad\To\quad
  [\j_\hk]    =0\quad\text{and}\quad
  [F_m]       =+\frc12,
 \label{e:dim}
\end{equation}
and choose the $A^{i\hk}$ to be dimensionless numerical constants. Then,
\begin{equation}
  [\sL_{\sss\vec{A}}^{\sss\text{KE}}]
  =[\C6{Q_3}\C1{Q_2}\C3{Q_1}\,A^{i\hk}\f_i\j_\hk] = +1,
  \quad\text{and}\quad
  [\int\rd\t\,\sL_{\sss\vec{A}}^{\sss\text{KE}}] = 0.
\end{equation}
However, these forty terms are not all unrelated, and Table~\ref{t:KE} specifies six linearly independent manifestly supersymmetric kinetic Lagrangian terms.
\begin{table}[htpb]
\vspace{-1mm}
$$
  \begin{array}{@{} r@{\,\Defl}r|r@{\,\Defl\,}l @{}}
\MC2{c|}{\bs{k^a(\f,\j)}}
 &\MC2l{\text{\bsf Supersymmetric Kinetic Terms
         \boldmath$K^a(\f,\j,F)\Defl\C6{Q_3}\C1{Q_2}\C3{Q_1}\,k^a(\f,\j)$}}
 \\*[1pt]
    \toprule
k^1&\f_1\j_4 & K^1 &
 F_1^{~2} +(F_3{+}\dt\f_3)^2 +\dt\f_1^{~2} +\dt\f_2^{~2}
  +i\j_1\dt\j_1 +i\j_2\dt\j_2 +i\j_3\dt\j_3 +i\j_4\dt\j_4
  \\*
k^2&\f_4 \j_6 & K^2 &
 F_2^{~2} + F_3^{~2} + \dt\f_4^{~2} + \dt\f_5^{~2}
 +i\j_5\dt\j_5 +i\j_6\dt\j_6 +i\j_7\dt\j_7 +i\j_8\dt\j_8
 \\*
k^3&\f_1\j_5 & K^3 & 
   F_1 F_2 - F_3 \dt\f_2 + (F_3 + \dt\f_3)\dt\f_4 + \dt\f_1 \dt\f_5
 - i\j_1 \dt\j_6 + i\j_2 \dt\j_8 - i\j_3 \dt\j_7 + i\j_4 \dt\j_5 
 \\*
k^4&-\f_1\j_6 & K^4 &
   F_1 F_3 + F_2 \dt\f_2 + (F_3 + \dt\f_3) \dt\f_5 - \dt\f_1 \dt\f_4
 - i\j_1 \dt\j_5 + i\j_2 \dt\j_7 + i\j_3 \dt\j_8 - i\j_4 \dt\j_6 
 \\*
k^5&\f_1\j_7 & K^5 & 
   F_3(F_3+\dt\f_3) - F_1 \dt\f_5 + F_2 \dt\f_1 + \dt\f_2 \dt\f_4 
 + i\j_1 \dt\j_8 + i\j_2 \dt\j_6 + i\j_3 \dt\j_5 + i\j_4 \dt\j_7 
 \\*
k^6&\f_1\j_8 & K^6 &
   F_2(F_3+\dt\f_3) - F_1 \dt\f_4 - F_3 \dt\f_1 - \dt\f_2 \dt\f_5 
 - i\j_1 \dt\j_7 - i\j_2 \dt\j_5 + i\j_3 \dt\j_6 + i\j_4 \dt\j_8
\\ 
    \bottomrule
  \end{array}
$$\vspace{-7mm}
  \caption{Linearly independent manifestly supersymmetric kinetic terms}
  \label{t:KE}
\end{table}
These were taken from the full listing, given in Table~\ref{t:KE0}, in the appendix. Already in that table, some of the entries vanish, indicating that the corresponding $\C6{Q_3}\C1{Q_2}\C3{Q_1}(\f_i\j_\hk)$ term is a total time-derivative. Further identities reduce this list to the six rows of Table~\ref{t:KE}; see appendix~\ref{a:KE} for further details.

This leaves the manifestly supersymmetric kinetic Lagrangian\eq{e:L0} expressible as
\begin{subequations}
 \label{e:KE}
\begin{align}
  \sL_{\sss\vec{A}}^{\sss\text{KE}}
 &\Defl \sum_{a=1}^6 A_a\,K^a(\f,\j,F)
   = \inv2\C6{Q_3}\C1{Q_2}\C3{Q_1}\Big(\sum_{a=1}^6 A_a\,k^a(\f,\j)\Big),\\
 &~=\inv2\C6{Q_3}\C1{Q_2}\C3{Q_1}\big[  A_1\f_1\j_4
                                     {+}A_2\f_4\j_6
                                      + A_3\f_1\j_5
                                     {-}A_4\f_1\j_6
                                     {+}A_5\f_1\j_7
                                     {+}A_6\f_1\j_8 \big]
\end{align}
\end{subequations}
and depending explicitly on the array $\vec{A}=(A_1,\cdots,A_6)$ of six free parameters.

For example, the choice
\begin{equation}
 \begin{aligned}
  \sL_{\sss(1,0,0,0,0,0)}^{\sss\text{KE}}
  &=\inv2\C6{Q_3}\C1{Q_2}\C3{Q_1}(\f_1\j_4),\\
  &=\inv2\big( F_1^{~2}
              +\dt\f_1^{~2}
              +\dt\f_2^{~2}
              +(F_3{+}\dt\f_3)^2 \big)
    +\frc{i}2\big(\j_1\dt\j_1
              + \j_2\dt\j_2 
              +\j_3\dt\j_3
              +\j_4\dt\j_4 \big)
 \end{aligned}
 \label{e:KE=L}
\end{equation}
is---except for the $\dt\f_3{+}F_3$ mixing---the standard kinetic term for the left-hand half of the supermultiplet, as it is depicted in Figure~\ref{f:A>B}. This mixing with a component field from the right-hand half of the supermultiplet, $\dt\f_3$ with $F_3$, owes to the existence of the one-way supersymmetry action depicted in Figure~\ref{f:A>B} by the tapering edges---of which $F_3$ is the sole target component. We note in passing that the equation of motion of $F_3$ is then $F_3=-\dt\f_3$, the use of which completely eliminates both $F_3$ and $\dt\f_3$ from this simple Lagrangian.

 The nonlocal field redefinition $\Tw\f_3\Defl\f_3+\int\rd\t\,F_3$ would remove this mixing and would decompose the supermultiplet\eq{e:GKY} into a direct sum of the left-hand half and the right-hand half of the graph in Figure~\ref{f:A>B}. However, a nonlocal field redefinition is not acceptable as an equivalence relation, the supermultiplet\eq{e:GKY} does not decompose\cite{rTHGK12}, and the kinetic terms\eq{e:KE=L} remain mixed with $F_3$ from the other half of the supermultiplet.
 
 In turn, the choice
\begin{equation}
 \begin{aligned}
  \sL_{\sss(0,1,0,0,0,0)}^{\sss\text{KE}}
  &=\inv2\C6{Q_3}\C1{Q_2}\C3{Q_1}(\f_4\j_6),\\
  &=\inv2\big( F_2^{~2}
              +F_3^{~2}
              +\dt\f_4^{~2}
              +\dt\f_5^{~2} \big)
   +\frc{i}2\big(\j_5\dt\j_5
              +\j_6\dt\j_6
              +\j_7\dt\j_7 
              +\j_8\dt\j_8 \big)
 \end{aligned}
 \label{e:KE=R}
\end{equation}
provides the standard kinetic term for the right-hand half of the supermultiplet as depicted in Figure~\ref{f:A>B}. This portion involves none of the ``left-hand side'' components, $(\f_1,\f_2,\f_3|\j_1,\j_2,\j_3,\j_4|F_1)$, since none of these are the target components of any of the one-way supersymmetry action; see Figure~\ref{f:A>B}.

This identifies the simple sum of\eq{e:KE=L} and\eq{e:KE=R}:
\begin{equation}
 \begin{aligned}
  \sL_{(1,1,0,0,0,0)}^{\sss\text{KE}}
  &=\frc12\C6{Q_3}\C1{Q_2}\C3{Q_1}(\f_1\j_4+\f_4\j_6),\\
  &=\frc12\big[ \dt\f_1^{~2}
               +\dt\f_2^{~2}
               +(F_3{+}\dt\f_3)^2
               +\dt\f_4^{~2}
               +\dt\f_5^{~2} \big]
   +\frc12\big[ F_1^{~2}
               +F_2^{~2}
               +F_3^{~2}\big]
\\
  &\quad
   +\frc{i}2\big[\j_1\dt\j_1
                +\j_2\dt\j_2
                +\j_3\dt\j_3
                +\j_4\dt\j_4 
                +\j_5\dt\j_5 
                +\j_6\dt\j_6
                +\j_7\dt\j_7 
                +\j_8\dt\j_8 \big],
 \end{aligned}
 \label{e:StdKE}
\end{equation}
as the familiar-looking standard kinetic term in a supersymmetric Lagrangian for the supermultiplet\eq{e:GKY}. The equation of motion of $F_3$ from\eq{e:StdKE} is $F_3=-\fc12\dt\f_3$. Eliminating $F_3$ from the Lagrangian\eq{e:StdKE} changes the $\fc12\dt\f_3^{~2}$-term into $\fc14\dt\f_3^{~2}$ and so reduces the effective ``mass'' of $\f_3$ from $1$ to $\inv{\sqrt2}$, but does not eliminate this propagating component field from the Lagrangian---unlike the case with the ``left-hand side'' terms\eq{e:KE=L} alone.

The 6-dimensional complement $\{\vec{A}\in\IR^6,~\vec{A}\neq(1,1,0,0,0,0)\}$ then para\-met\-ri\-zes ``non-standard'' but manifestly supersymmetric kinetic terms. In retrospect, the fact that the kinetic Lagrangian terms for the well-known and well-used models are unique\cite{r1001,rBK} stem not only from a higher supersymmetry\ft{Simple (${\cal N}\,{=}\,1$) supersymmetry in 4-dimensional spacetime, \eg, is equivalent to $N\,{=}\,4$ on the worldline.} (see also Section~\ref{s:SuSyX}), but also from conventional symmetries (including the Lorentz group) and/or additional (complex and/or hyper-complex) structures.

The full 6-parameter dependence of the kinetic terms\eq{e:KE} however does remain available in all worldline and some worldsheet applications, as could be useful in string theory and its $M$- and $F$-theory extensions: The graphical rendition in Figure~\ref{f:A>B} easily shows that, as per the ``bow-tie'' theorem of Ref.\cite{rGH-obs}, the supermultiplet\eq{e:GKY} extends to a {\em\/worldsheet\/} off-shell supermultiplet of $(3,0)$- or $(0,3)$-supersymmetry.

The standard kinetic Lagrangian\eq{e:StdKE} has all the summands in a uniform format and with a positive sign, allowing for a straightforward construction of a partition functional and the corresponding unitary quantum model.
 Each of the $K^3,\cdots,K^6$ terms mixes the component fields from the left-hand side and the right-hand side of the supermultiplet as depicted in Figure~\ref{f:A>B}. This necessarily opens the possibility of non-unitarity, which becomes manifest upon diagonalizing the component fields to ``normal modes.'' For example, the Lagrangian
\begin{alignat}9
 \sL_{(1,1,\a,0,0,0)}^{\sss\text{KE}}
 &=\ldots +i\j_1\dt\j_1+\ldots +i\j_6\dt\j_6+\ldots
   +\frc{i}2\a(\j_1\dt\j_6-\dt\j_1\j_6)+\ldots
 \label{e:11a}
\end{alignat}
has normal modes that are eigenvectors of the matrix $\big[\begin{smallmatrix}2&\a\\\a&2\end{smallmatrix}\big]$, and which appear in the diagonalized Lagrangian with coefficients ${\propto}\,(2{\pm}\a)$. For both of these to have the positive sign, we must limit the range of this free parameter to $|\a|<2$. For the Lagrangian\eq{e:11a}, this choice suffices to insure the positivity of all fermionic kinetic energy terms, as well as the diagonalized $(\f_1,\f_5)$- and $(F_1,F_2)$-terms. The diagonalization of $(\f_2,\f_3,\f_4,F_3)$-terms is more involved, but yields the simple reduction of the unitarity range to $|\a|<\fc1{\sqrt2}$.
 The unitarity conditions on the full 6-dimensional parameter space of\eq{e:KE} are of course considerably more involved. Nevertheless, the parameter space does include a continuum of unitary (positive) Lagrangians\eq{e:11a} when $|\a|<\fc1{\sqrt2}$.

In principle, any particular choice from among the Lagrangians\eq{e:KE} may be transformed into a ``standard-looking'' kinetic Lagrangian
\begin{equation}
 \frc12\sum_{i=1}^5\dt\vf_i{}^2
 +\frc{i}2\sum_{\hk=1}^8\c_\hk\dt\c_\hk
  +\frc12\sum_{m=1}^3 G_m{}^2
 \label{e:StdKE0}
\end{equation}
by means of a judicious field redefinition. However, all but very special---and incomplete, such as\eq{e:KE=R}---choices of the $\vec{A}$-dependent Lagrangians will require nonlocal field redefinitions such as $\Tw\f_3\Defl\f_3+\int\rd\t\,F_3$, to transform into the form\eq{e:StdKE0}. Every such field redefinitions also significantly complicates the supersymmetry transformation rules, being that\eq{e:GKY} is the simplest representative of the gauged quotient continuum of choices\cite{rTHGK12}.
 Suffice it here to note that we may take\eq{e:StdKE} as the starting point, where the $\dt\f_3+F_3$ mixing cannot be removed by local field redefinition. The complementary 6-parameter variation\eq{e:KE} then describes nontrivial variations in dynamics, although a precise determination of which of these variations remain inequivalent under local field redefinitions is beyond our present scope.

\subsection{Manifestly Supersymmetric Zeeman Terms}
\label{s:ZE}
The Lagrangian\eq{e:KE} consists of bilinear terms with the mass-dimension $+1$, its $\t$-integral has the physical units of $\hbar$, and so requires no multiplicative mass-parameter. The constants $\vec{A}$ in the Lagrangian\eq{e:KE} must be chosen to be purely numerical.

It turns out, however, that the supermultiplet\eq{e:GKY} may also have supersymmetric Lagrangian terms with mass-dimension 0, thus requiring at least one mass-like multiplicative parameter. Such terms were also found in Ref.\cite{r6-7a}, where they arose as Zeeman-like interactions of a supermultiplet with external magnetic fields/fluxes.

To this end and for the off-shell supermultiplet\eq{e:GKY}, we have:
\begin{lemm}\label{T:Z}
Let $z(\f)\Defl z^{ik}\f_i\f_k$ be an expression that is bilinear in the lowest components of the supermultiplet\eq{e:GKY}, chosen such that
\begin{equation}
  \C6{Q_3}\C1{Q_2}\C3{Q_1}\big[z(\f_k)\big]\simeq 0~\text{(mod $\vdt\cK$)},\qquad
 \cK=\cK(\f,\j,F)~\text{analytic}.
\end{equation}
 The expressions
\begin{equation}
  \C3{Q_1}\C1{Q_2}\big[z(\f)\big],\quad
  \C3{Q_1}\C6{Q_3}\big[z(\f)\big]\quad \text{and}\quad
  \C1{Q_2}\C6{Q_3}\big[z(\f)\big]
 \label{e:ZE}
\end{equation}
are then manifestly supersymmetric, and linear combinations of these expressions that are not total time-derivatives are nontrivial supersymmetric Lagrangians.
\end{lemm}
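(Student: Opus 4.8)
\medskip
\noindent\textbf{Proof proposal.}\
The plan is to reduce the statement to pure algebra, using only that \eq{e:SuSyQ} makes each $Q_I$ act on the Grassmann-graded algebra of component-field polynomials as an odd derivation with $Q_IQ_I=H=i\vdt$, with $Q_IQ_J=-Q_JQ_I$ for $I\neq J$, and with $[H,Q_I]=0$ (so that $Q_I$ commutes with $\vdt$). Recall that a Grassmann-even density is ``manifestly supersymmetric'' precisely when each of $Q_1,Q_2,Q_3$ sends it to a total $\tau$-derivative; since this is a real-linear condition, it suffices to verify it separately for each of the three densities $Q_1Q_2[z]$, $Q_1Q_3[z]$, $Q_2Q_3[z]$ (each Grassmann-even, because $z$ is). One preliminary observation carries the whole load: since the $Q_I$ pairwise anticommute, any permutation of $Q_1Q_2Q_3$ changes the product by at most a sign, so the hypothesis $Q_3Q_2Q_1[z(\f)]\simeq 0$ (mod $\vdt\cK$) is equivalent to $Q_1Q_2Q_3[z(\f)]\simeq 0$ (mod $\vdt\cK$) and to the corresponding statement for every other ordering (with $\cK$ possibly replaced by $-\cK$); this is the only place the hypothesis is used.

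Next I would carry out the nine-case check of $Q_I\,(Q_JQ_K[z])$ over the three subsets $\{J,K\}\subset\{1,2,3\}$ and the three choices of $I$. If $I$ coincides with one of $J,K$, one slides the repeated generator next to itself by means of $Q_IQ_L=-Q_LQ_I$ and collapses the pair with $Q_IQ_I=i\vdt$; for example $Q_1(Q_1Q_2[z]) = H\,Q_2[z] = i\vdt(Q_2[z])$ and $Q_2(Q_1Q_2[z]) = -Q_1(Q_2Q_2[z]) = -Q_1(H[z]) = -i\vdt(Q_1[z])$, and likewise for the remaining densities --- outcomes that are total $\tau$-derivatives for \emph{every} $z$ whatsoever. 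If instead $I$ is the index missing from $\{J,K\}$, reordering gives $Q_I(Q_JQ_K[z]) = \pm\,Q_1Q_2Q_3[z]$, a total $\tau$-derivative precisely by the reformulated hypothesis. Running through all three densities this exhausts the nine combinations, so each expression in \eq{e:ZE} --- and hence any linear combination of them --- is annihilated by every $Q_I$ modulo $\vdt(\cdots)$, i.e.\ is manifestly supersymmetric. The ``nontrivial'' clause is then immediate: a linear combination of \eq{e:ZE} that is not itself of the form $\vdt(\cdots)$ contributes to $\int\rd\t\,\sL$ a term that is not a boundary term, hence a genuine supersymmetry-invariant addition to the action; for completeness I would also record the dimensional bookkeeping --- with $[\f_i]=-\frac12$ and $z^{ik}$ a numerical constant one has $[z(\f)]=-1$ and $[Q_JQ_K\,z]=0$ --- that forces the accompanying mass-like coupling and justifies the ``Zeeman'' terminology, in contrast with the dimension-$+1$ kinetic terms of Table~\ref{t:KE}.

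The genuinely computational part lies outside this argument, and is where I expect the real effort: one must exhibit coefficient arrays $z^{ik}$ for which $Q_3Q_2Q_1[z(\f)]\simeq 0$ (mod $\vdt\cK$) \emph{and} for which at least one of \eq{e:ZE} fails to be a total derivative. I would do this by the same explicit evaluation of $Q_3Q_2Q_1(\f_i\f_k)$ that produced Table~\ref{t:KE} for the kinetic terms --- now imposing the linear condition that this triple action vanish modulo $\vdt(\cdots)$, and then testing which of the surviving $z$ yield nonzero \eq{e:ZE}. Modulo total derivatives and linear identities this is a finite linear-algebra problem on the space of symmetric $5\times5$ arrays $z^{ik}$, and the main obstacle is bookkeeping: keeping track of the $\vdt$-total-derivative identities carefully enough to pin down the true dimension of the surviving family of admissible $z$, rather than anything conceptual in the invariance argument itself.
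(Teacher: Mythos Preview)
Your proof is correct and follows essentially the same route as the paper's: both arguments check $Q_I$ applied to $Q_JQ_K[z]$, using $Q_I^2=i\vdt$ when $I\in\{J,K\}$ and the hypothesis $Q_3Q_2Q_1[z]\simeq0$ (mod $\vdt\cK$) when $I$ is the missing index. The paper simply packages all three cases at once by writing $\epsilon^IQ_I\,Q_1Q_2[z]$ and reading off the total derivative, then invokes cyclicity for the other two expressions, whereas you spell out the nine cases explicitly; the logical content is identical.
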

\begin{proof}
The general supersymmetry transformation of the first of the expressions\eq{e:ZE} is generated by
\begin{align}
 \e^IQ_I\,\C3{Q_1}\C1{Q_2}\big[z(\f)\big]
 &=i\e^1\big[\C3{Q_1^{~2}}\C1{Q_2}z(\f)\big]
   -i\e^2\big[\C3{Q_1}\C1{Q_2^{~2}}z(\f)\big]
  -\e^3\C6{Q_3}\C1{Q_2}\C3{Q_1}\big[z(\f)\big],\nn\\
 &=\vdt\big[i\e^1\C1{Q_2}z(\f)-i\e^2\C3{Q_1}z(\f)+\e^3\cK\big].
 \label{e:ZEP}
\end{align}
For the remaining two expressions\eq{e:ZE}, the roles of the three terms in the expansion\eq{e:ZEP} change cyclically but produce a total time-derivative just the same.
\end{proof}
This construction provides Lagrangian terms analogous to the so-called ``$F$-terms'' within standard constructions with simple supersymmetry in 4-dimensional spacetime\cite{r1001,rPW,rWB,rBK}.

Table~\ref{t:ZE0} in the appendix lists the $\binom{5+1}2=15$ expressions of the form $\C6{Q_3}\C1{Q_2}\C3{Q_1}(\f_i\f_k)$. Besides the two vanishing entries therein, straightforward row-operations find several additional vanishing bilinear combinations of this sort. Thus,
\begin{equation}
 z(\f)=\bigg\{
 \begin{array}{c}
  \fc12(\f_1^{~2}{-}\f_2^{~2}),~~
  (\f_1\f_2),~~
  (\f_1\f_3{+}\f_2\f_5),~~
  (\f_1\f_4{+}\f_3\f_5),~~
  (\f_1\f_5{-}\f_2\f_3),\\
  (\f_1\f_5{-}\f_3\f_4),~~
  \fc12(\f_4^{~2}-\f_5^{~2}),~~
  (\f_4\f_5),~~
  \fc12(\f_1^{~2}{-}\f_3^{~2}{+}\f_5^{~2}{-}2\f_2\f_4)
\end{array}
 \label{e:pZb}
\end{equation}
provide a basis of nine linearly independent bilinear functions for the construction in Lemma~\ref{T:Z}. From each of these, we compute the three possible $Q_IQ_J$-transforms\eq{e:ZE}, as listed in Tables~\ref{t:ZE01}--\ref{t:ZE03} in appendix~\ref{a:ZE}. Within these, row-operations straightforwardly reduce to the final list of linearly independent supersymmetric super-Zeeman Lagrangian terms, listed in Table~\ref{t:SZ} by simplicity.
\begin{table}[htpb]
\vspace{-1mm}
$$
  \begin{array}{@{} r@{\,\Defl\,}l @{}}
\MC2c{\text{\bsf\boldmath The Super-Zeeman Lagrangian Terms $iQ_IQ_J\,z(\f)$}}
 \\[1pt]
  \toprule
Z^1&
  i\j_5\j_7
 -i\j_6\j_8
 -\f_4F_3
 -\f_5F_2
\\
Z^2&
  i\j_5\j_8
 +i\j_6\j_7
 -\f_4F_2
 +\f_5F_3
\\[1pt]
Z^3&
  i\j_1\j_2
 +i\j_3\j_4
 -\f_1F_1
 -\f_2F_3
 -\f_2\dt\f_3
\\
Z^4&
  i\j_1\j_4 
 +i\j_2\j_3
 -\f_1F_3
 +\f_2F_1
 -\f_1\dt\f_3
\\[1pt]
Z^5&
  i\j_1\j_8
 +i\j_2\j_6
 +i\j_3\j_5
 +i\j_4\j_7
 -\f_1F_2
 -\f_3F_3
 -\f_5F_1
 -\f_2\dt\f_4 
\\[1pt]
Z^6&
  i\j_1\j_7
 +i\j_2\j_5
 -i(\j_3{-}\j_5)\j_6
 -i(\j_4{-}\j_7)\j_8
 -\f_1F_3
 +\f_3F_2
 +\f_4F_1
 -\f_2\dt\f_5
 +\f_4\dt\f_5 
\\[1pt]
Z^7&
  i\j_1(\j_3{-}\j_5)
 -i\j_2(\j_4{-}\j_7)
 +i\j_3\j_8
 -i\j_4\j_6
 -\f_2F_2
 -\f_3F_1
 +\f_5F_3
 -\f_1\dt\f_2
 +\f_1\dt\f_4
 -\f_3\dt\f_5 
\\
\bottomrule
  \end{array}
$$\vspace{-7mm}
  \caption{The linearly independent supersymmetric super-Zeeman Lagrangian terms. All seven appear for every choice of $I,J=1,2,3$, with $z(\f)$ from the collection\eq{e:pZb}; see Tables~\ref{t:ZE01}--\ref{t:ZE03}.}
  \label{t:SZ}
\end{table}

Except for the first two entries, $Z^1,Z^2$, the terms in Table~\ref{t:SZ} contain expressions of the form $\f_j\dt\f_k\simeq\inv2\big(\f_j\dt\f_k-\f_k\dt\f_j\big)$, up to total derivatives. With\eq{e:StdKE} chosen as the kinetic term, $\dt\f_j$ is proportional to the momentum canonically conjugate to $\f_j$, and $\big(\f_j\dt\f_k{-}\f_k\dt\f_j\big)$ is proportional to the angular momentum corresponding to rotations in the $(\f_j,\f_k)$-plane. The remainder of summands in the row then provide the supersymmetric completion of this angular momentum. A coefficient in the Lagrangian multiplying such a term is then the external magnetic flux coupling to this angular momentum. The latter five rows of Table~\ref{t:SZ} indicate that there are five such independent fluxes coupling supersymmetrically to the five corresponding combinations of angular momenta in the 5-dimensional $(\f_2,\f_4,\f_3,\f_1,\f_5)$-space.

The first two rows of Table~\ref{t:SZ} do not involve the $(\f_2,\f_4,\f_3,\f_1,\f_5)$-bosons, and so cannot be interpreted as such a magnetic flux coupling. Nevertheless, they do belong into this sector of the Lagrangian on dimensional grounds, and we refer to them also as super-Zeeman terms. We thus write
\begin{equation}
 \sL_{\sss\vec{B}}^{\sss\text{SZ}}\Defl B_a Z^a(\f,\j,F),
 \label{e:SZ}
\end{equation}
where $\vec{B}=(B_1,\cdots,B_7)$ is a seven-component array of background fluxes, and $Z^a(\f,\j,F)$ are the seven expressions from Table~\ref{t:SZ}.

\subsection{Simple Harmonic Helicoidal Response}
Consider for example, the term
\begin{equation}
  \sL_{\sss(0,0,B_3,0,0,0,0)}^{\sss\text{SZ}}
 = B_3\big[i\j_1 \j_2 + i \j_3 \j_4
           - F_1 \f_1 - F_3 \f_2 - \fc12(\f_2\dt\f_3-\f_3\dt\f_2)\big].
 \label{e:SZ3}
\end{equation}
It describes a super-Zeeman interaction of the supermultiplet\eq{e:GKY} with a magnetic flux $B_3$ passing through the $(\f_2,\f_3)$-plane. Adding\eq{e:SZ3} to the lagrangian\eq{e:StdKE}, the equations of motion for $F_1$ and $F_3$ set
\begin{equation}
  \d_{F_1}\sL=0:~~F_1\to B_3\f_1\quad\text{and}\quad
  \d_{F_3}\sL=0:~~F_3\to\fc12(B_3\f_2-\dt\f_3),
\end{equation}
which eliminates $F_1$ and $F_3$ from the Lagrangian, modifying it (upon integration by parts) into
\begin{equation}
 \begin{aligned}
  &\sL_{\sss(1,1,0,0,0,0)}^{\sss\text{KE}}+\sL_{\sss(0,0,B_3,0,0,0,0)}^{\sss\text{SZ}}\\
  &\qquad
    \to+\fc12\dt\f_1^{~2}+\fc12\dt\f_2^{~2}+\fc14\dt\f_3^{~2}
        +\fc12\dt\f_4^{~2}+\fc12\dt\f_5^{~2}
         -\fc12B_3^{~2}(\f_1^{~2}+\fc12\f_2^{~2})+\fc12 B_3\dt\f_2\,\f_3+\dots
 \end{aligned}
 \label{e:LSZ3}
\end{equation}
The resulting system of Euler-Langrange equations of motion are
\begin{subequations}
 \label{e:Z3}
\begin{alignat}9
 \ddt\f_1+B_3^{~2}\f_1
 &=0\qquad&\text{so}\qquad
 \f_1&=a_1\sin\big(B_3\t+\d_1\big),\\*
 \ddt\f_2 +\fc12B_3^{~2}\f_2 +\fc12B_3\dt\f_3
 &=0\qquad&\text{so}\qquad
 \f_2&=a_2\sin\big(B_3\t+\d_2\big)+\D_2,\\*
 \ddt\f_3 -B_3\dt\f_2
 &=0\qquad&\text{so}\qquad
 \f_3&=-a_2\cos\big(B_3\t+\d_2\big)-\D_2 B_3\t+\D_3,\\*
 \ddt\f_4
 &=0\qquad&\text{so}\qquad
 \f_4&=v_4\t+\D_4,\\*
 \ddt\f_5
 &=0\qquad&\text{so}\qquad
 \f_5&=v_5\t+\D_5,
\end{alignat}
\end{subequations}
where the amplitudes $a_1,a_2$, the phases $\d_1,\d_2$, the velocities $v_4,v_5$ and the displacements $\D_2,\D_3,\D_4$ and $\D_5$ are the integration constants parametrizing the simple harmonic, helicoidal motion\eq{e:Z3}, all with one and the same frequency, equal to $B_3$.

It is evident from Table~\ref{t:SZ} that including the super-Zeeman Lagrangian term $Z^4(\f)$ instead of $Z^3(\f)$ in\eq{e:SZ3} would have a similar effect: the $(\f_1,\f_2;\j_2,\j_4)\to(-\f_2,\f_1;\j_4,-\j_2)$ swap ``rotates'' $(Z^3,Z^4)\to(Z^4,-Z^3)$. Indeed, adding any one of $Z^3,Z^4,Z^5$ to the standard kinetic Lagrangian\eq{e:StdKE}, eliminating the auxiliary fields $F_1,F_2$ and $F_3$ by means of their equations of motion and then solving the equations of motion for $\f_1,\cdots,\f_5$ results in a helicoidal response akin to\eq{e:Z3}.

In fact, even the first two terms, $Z^1$ and $Z^2$, induce a similar helicoidal response although they do not include $(\f_i\dt\f_k\,{-}\,\f_k\dt\f_i)$-like angular momentum terms. For example, adding $B_1 Z^1$ to the standard kinetic Lagrangian\eq{e:StdKE} and eliminating $F_1,F_2$ and $F_3$ through their equations of motion produces the Lagrangian
\begin{equation}
 \begin{aligned}
  &\sL_{\sss(1,1,0,0,0,0)}^{\sss\text{KE}}+\sL_{\sss(B_1,0,0,0,0,0,0)}^{\sss\text{SZ}}\\
  &\qquad
    \to+\fc12\dt\f_1^{~2} +\fc12\dt\f_2^{~2} +\fc14\dt\f_3^{~2}
        +\fc12\dt\f_4^{~2} +\fc12\dt\f_5^{~2}
         -\fc12B_1^{~2}(\fc12\f_4^{~2}+\f_5^{~2})+\fc12 B_1\dt\f_3\,\f_4+\dots
 \end{aligned}
 \label{e:LSZ1}
\end{equation}
which equals\eq{e:LSZ3} upon the $(\f_1,\f_2,\f_3,\f_4,\f_5)\to(\f_5,\f_4,-\f_3,\f_2,\f_1)$ field redefinition and integration by parts on the last term. Thus, despite different appearances and the lack of $(\f_i\dt\f_k\,{-}\,\f_k\dt\f_i)$-like angular momentum terms in $Z^1$ and $Z^2$, the Lagrangian terms in Table~\ref{t:SZ} indeed all couple the supermultiplet\eq{e:GKY} to external magnetic fluxes, thus justifying the grouping in Table~\ref{t:SZ}. Of these, each one of the first five induces this type of simple harmonic, helicoidal response.

\subsection{Golden Ratio Chaos}
Adding either $Z^6$ or $Z^7$ to the standard kinetic Lagrangian\eq{e:StdKE} however results in a radically different response. Consider the bilinear $N\,{=}\,3$-supersymmetric worldline Lagrangian
\begin{subequations}
 \label{e:L6}
\begin{alignat}9
 \sL
  &=\sL^{\sss\text{KE}}_{\sss(1,1,0,0,0,0)} + \sL^{\sss\text{SZ}}_{\sss(0,0,0,0,0,B_6,0)},\\
  &=\frc12\big[ \dt\f{}_1^{~2}
               +\dt\f{}_2^{~2}
               +(F_3{+}\dt\f_3)^2
               +\dt\f{}_4^{~2}
               +\dt\f{}_5^{~2} \big]
   +\frc12\big[ F_1^{~2}
               +F_2^{~2}
               +F_3^{~2}\big] +\dots
\\
  &\quad
  +B_6\big[{-}\f_1F_3
           +\f_3F_2
           +\f_4F_1
           -\f_2\dt\f_5
           +\f_4\dt\f_5 +\dots\big],
\end{alignat}
\end{subequations}
where the fermion-fermion terms have been omitted.
 The equations of motion for the auxiliary fields $F_1,F_2,F_3$ are:
\begin{alignat}9
 \d_{F_1}:
 &&\qquad
 0
 &\Is F_1+B_6\f_4
 \quad&\To\quad
 F_1&=-B_6\f_4;\\
 \d_{F_2}:
 &&\qquad
 0
 &\Is F_2+B_6\f_3
 \quad&\To\quad
 F_2&=-B_6\f_3;\\
 \d_{F_3}:
 &&\qquad
 0
 &\Is (F_3+\dt\f_3)+F_3-B_6\f_1
 \quad&\To\quad
 F_3&=\frc12(B_6\f_1-\dt\f_3).
\end{alignat}
Substituting these back into\eq{e:L6}, we obtain:
\begin{subequations}
 \label{e:L6f}
\begin{alignat}9
 \sL
  &=\frc12 \dt\f{}_1^{~2}
   +\frc14\dt\f{}_3^{~2}
   +\frc12B_6\f_1\dt\f_3
   -\frc14 B_6^{~2} \f_1^{~2}
   -\frc12 B_6^{~2} \f_3^{~2}
\label{e:L6a}\\
  &\quad
   +\frc12 \dt\f{}_2^{~2}
   +\frc12 \dt\f{}_4^{~2}
   +\frc12 \dt\f{}_5^{~2}
   -B_6\f^{}_2\dt\f_5
   +B_6\f^{}_4\dt\f_5
   -\frc12 B_6^{~2} \f_4^{~2}
   +\dots
\label{e:L6b}
\end{alignat}
\end{subequations}
Notice that the five bosonic fields decouple into two groups: the $(\f_1,\f_3)$-plane and the $(\f_2,\f_4,\f_5)$-volume.
 The two resulting linear differential systems of bosonic equations of motion are:
\begin{subequations}
 \label{e:f13}
\begin{alignat}9
0&= \frc{\rd}{\rd t}\big(\dt\f_1\big)
    -\big(\inv2B_6\dt\f_3-\inv2B_6^{~2}\f_1\big)&\qquad
0&=2\ddt\f_1-B_6\dt\f_3+B_6^{~2}\f_1, \label{e:f13a}\\*
\smash{\raisebox{4.5mm}{$\left\{\rule{0pt}{8mm}\right.$}}
0&= \frc{\rd}{\rd t}\Big(\inv2\dt\f_3+\inv2B_6\f_1\big)
    -\big({-}B_6^{~2}\f_3\big)
 &\qquad\smash{\raisebox{4.5mm}{$\left\{\rule{0pt}{8mm}\right.$}}
0&=\ddt\f_3+B_6\dt\f_1+2B_6^{~2}\f_3;\label{e:f13b}
\end{alignat}
\end{subequations}
and
\begin{subequations}
 \label{e:f245}
\begin{alignat}9
0&= \frc{\rd}{\rd t}\big(\dt\f_2\big)
    -\big({-}B_6\dt\f_5\big)&\qquad
0&=\ddt\f_2+B_6\dt\f_5,\label{e:f245a}\\*
\smash{\raisebox{.5mm}{$\left\{\rule{0pt}{12mm}\right.$}}
0&= \frc{\rd}{\rd t}\big(\dt\f_4\big)
    -\big(B_6\dt\f_5-B_6^{~2}\f_4\big)&\qquad
\smash{\raisebox{.5mm}{$\left\{\rule{0pt}{12mm}\right.$}}
0&=\ddt\f_4-B_6\dt\f_5+B_6^{~2}\f_4,\label{e:f245b}\\*
0&= \frc{\rd}{\rd t}\big(\dt\f_5 -B_6\f^{}_2 +B_6\f^{}_4\big)
    -\big(0\big)&\qquad
0&=\ddt\f_5-B_6\dt\f_2+B_6\dt\f_4.\label{e:f245c}
\end{alignat}
\end{subequations}

Consider the planar system\eq{e:f13} first. Substituting $\dt\f_3=(2\ddt\f_1+B_6^{~2}\f_1)/B_6$ from\eq{e:f13a} into the derivative of\eq{e:f13b}, we obtain
\begin{equation}
 \frc2{B_6}(\ivt\f_1+3B_6^{~2}\ddt\f_1+B_6^{~4}\f_1) = 0.
 \label{e:ddddf1}
\end{equation}
Looking for a solution in the form $\f_1=A_1\sin(\w\t{+}\d_1)$, we obtain:
\begin{equation}
  (\w^4-3B_6^{~2}\w^2+B_6^{~4})A_1\sin(\w\t{+}\d_1)=0,
 \label{e:Ansatz}
\end{equation}
the (positive) solutions of which
\begin{equation}
  \w_+=\vf B_6,\quad \w_-=\vf^{-1}B_6,
\end{equation}
are irrational multiples of the magnetic flux $B_6$, by factors of the {\em\/Golden Ratio\/},
\begin{equation}
  \vf\Defl\frc{\sqrt5+1}2\approx1.61803,
   \qquad\text{for which}\qquad
  \vf^{-1}=\vf{-}1.
 \label{e:GR}
\end{equation}
The general $(\f_1,\f_3)$-solution may thus be written as
\begin{subequations}
 \label{e:f1f3}
\begin{alignat}9
 \f_1
 &=a_1\sin(\vf B_6\t{+}\d_1)
   +\tw{a}_1\sin(\vf^{-1} B_6\t{+}\tw\d_1), \label{e:f1}
\intertext{and, substituting $\ddt\f_3$ from\eq{e:f13a} into\eq{e:f13b}:}
 \f_3
 &=-\frc2{B_6^{~3}}\dddt\f_1 -\frc1{B_6}\dt\f_1,\label{e:f1>f3}\\
 &=a_1\vf^2\cos(\vf B_6\t{+}\d_1)
   -\tw{a}_1\vf^{-2}\cos(\vf^{-1} B_6\t{+}\tw\d_1),\label{e:f3}
\end{alignat}
\end{subequations}
owing to the identity\eq{e:GR} so $\vf^3{-}\vf=\vf^2$ and $\vf^{-3}{-}\vf^{-1}=-\vf^{-2}$.
 The solutions\eq{e:f1f3} are parametrized by four integration constants, the amplitudes $a_1,\tw{a}_1$ and the phases $\d_1,\tw\d_1$.

Turning now to the $(\f_2,\f_4,\f_5)$-volume, we express $\dt\f_5=-\ddt\f_2/B_6$ from\eq{e:f245a} and substitute this into\eqs{e:f245b}{e:f245c}:
\begin{equation}
 \begin{array}{r@{~=~}llr@{~=~}l}
  0& \ddt\f_4 +\ddt\f_2 +B_6^{~2}\f_4
   &\To&\ddt\f_2&-\ddt\f_4-B_6^{~2}\f_4,\\[1mm]
  0& -\frc1{B_6}\dddt\f_2 -B_6\dt\f_2 +B_6\dt\f_4
   &\To&0&\ivt\f_2 +B_6^{~2}\ddt\f_2 -B_6^{~2}\ddt\f_4.\\
 \end{array}
\end{equation}
Substituting now the requisite derivatives of $\f_2$ from the top equation into the bottom one produces
\begin{equation}
  \ivt\f_4 +3B_6^{~2}\ddt\f_4 +B_6^{~4}\f_4 = 0,
\end{equation}
which is identical to\eq{e:ddddf1} and so is analogously solved by
\begin{subequations}
 \label{e:f4f2f5}
\begin{equation}
  \f_4 =a_4\sin(\vf B_6\t{+}\d_4)
        +\tw{a}_4\sin(\vf^{-1} B_6\t{+}\tw\d_4).\label{e:f4}
\end{equation}
Substituting $\ddt\f_5$ from the derivative of\eq{e:f245b} into\eq{e:f245c}, we compute:
\begin{alignat}9
 \f_2 &= 2\f_4+B_6^{~2}\ddt\f_4 + \D_2,\nn\\
      &=-a_4\vf^{-1}\sin(\vf B_6\t{+}\d_4)
        +\tw{a}_4\vf\sin(\vf^{-1} B_6\t{+}\tw\d_4) + \D_2, \label{e:f2}\\
\intertext{where we used that $2{-}\vf^2=-\vf^{-1}$ and $2{-}\vf^{-2}=\vf$. Then, from\eq{e:f245a}:}
 \f_5 &=-\frc1{B_6}\dt\f_2 + \D_5,\nn\\
      &=a_4\cos(\vf B_6\t{+}\d_4)
        -\tw{a}_4\cos(\vf^{-1} B_6\t{+}\tw\d_4) + \D_5.\label{e:f5}
\end{alignat}
\end{subequations}
Note that $\f_2$ and $\f_5$ have undetermined constant displacements, $\D_2$ and $\D_5$, since the system\eq{e:f245} determines only $\dt\f_2$ and $\dt\f_5$. The resulting solutions\eq{e:f4f2f5} thus have the requisite six integration constants: the amplitudes $a_4,\tw{a}_4$, the phases $\d_4,\tw\d_4$ and the displacements $\D_2,\D_5$.

The solutions\eq{e:f1f3} and\eq{e:f4f2f5} clearly indicate that, in response to a {\em\/single\/} magnetic flux, $B_6$, the bosonic component fields $\f_1,\cdots,\f_5$ of the supermultiplet\eq{e:GKY} all oscillate with normal modes of two {\em\/incommensurate\/} frequencies
\begin{equation}
  \w_+ = \vf B_6 = \big(\frc{\sqrt5{+}1}2\big)B_6
   \quad\text{and}\quad
  \w_- = \vf^{-1} B_6 = \big(\frc{\sqrt5{-}1}2\big)B_6,
   \qquad \frac{\w_+}{\w_-}=\vf^2=\frc{\sqrt5{+}3}2,
 \label{e:2freqs}
\end{equation}
distinguished by powers of the {\em\/Golden Ratio\/}\eq{e:GR}. This implies aperiodic (chaotic) oscillatory motion: for any generic (random) choice of initial conditions, the bosonic system $(\f_1,\cdots,\f_5)$ will oscillate, but never return to the initial configuration---although it will come arbitrarily close to it and, as time passes, infinitely many times. After long enough time, the trajectory becomes a telltale toroidal surface filling curve with infinitely many self-intersections.

When coupling any system to at least two independent magnetic fluxes, their relative ratio may be varied continuously and will typically elicit an aperiodic response.
 However, it is highly unusual that component fields of a single, indecomposable supermultiplet such as\eq{e:GKY} respond chaotically to a {\em\/single\/} external magnetic flux to which they couple linearly in a Lagrangian that is bilinear in component fields.
 A more complete exploration of the frequency spectra and regimes as they occur in the various phases of the system with the combined Lagrangian
\begin{equation}
 \sL\Defl\sL_{\sss\vec{A}}^{\sss\text{KE}}+\sL_{\sss\vec{B}}^{\sss\text{SZ}}
 \label{e:KESZ}
\end{equation}
within the 13-dimensional parameter space $(\vec{A};\vec{B})$ is clearly beyond our present scope. The above results however show that there exist at least two radically different phases:
\begin{enumerate}\itemsep=-3pt\vspace{-2mm}
 \item the simple harmonic, helicoidal response regime, as in\eq{e:f1f3},
 \item the chaotic, aperiodic response regime, as in\eq{e:f4f2f5}.
\end{enumerate}

\section{Infinitely Many Manifestly Supersymmetric Interactions}
\label{s:MSI}
One way to introduce non-linearity in the dynamics initially described by the Lagrangian\eq{e:KE} is to make the six parameters $A_a$ into analytic functions of the component fields $\f_1,{\cdots},\f_5$, $\j_1,{\cdots},\j_8$ and $F_1,F_2,F_3$, and then restrict them by requiring the Lagrangian to be supersymmetric.

However, given the mass-dimensions\eq{e:dim} and by a straightforward iteration of the computation\eq{e:QL0}, it is clear that the quantities
\begin{align}
  \sL_{[p,q,r]}^{\sss\text{EG}}(\vec{C})
   &=\C6{Q_3}\C1{Q_2}\C3{Q_1}
      \big[ C^{i_1\cdots i_p\hk_1\cdots\hk_{(2q+1)}m_1\cdots m_r}\,
             \f_{i_1}\cdots\f_{i_p}\,
              \j_{\hk_1}\cdots\j_{\hk_{(2q+1)}}\,
                F_{m_1}\cdots F_{m_r}\big],
\intertext{with $p,r=0,1,2,3,\dots$, and $q=0,1,2,3$,}
 [\sL_{[p,q,r]}^{\sss\text{EG}}(\vec{C})]
   &=3({+}\frc12)+(0)+p({-}\frc12)+(2q{+}1)(0)+r(+\frc12)
    =\frac{3{-}p{+}r}2.
\end{align}
are all supersymmetric, for arbitrary coefficients $C^{^{\,\SSS\dots}}$.
 The quantity $\sL_{[p,q,r]}(\vec{C})$ is bosonic and an analytic function of the fields.
Therefore, with a suitable mass-parameter\ft{Given that the numerical constants $C^{i_1\cdots i_p\hk_1\cdots\hk_{(2q+1)}m_1\cdots m_r}$ are at this stage all arbitrary, it is always possible to fuse all possibly different mass-parameters of a theory into one, the ratios of the different mass-parameters in any desired, particular application being encoded in the choice of the dimensionless parameters $C^{i_1\cdots i_p\hk_1\cdots\hk_{(2q+1)}m_1\cdots m_r}$.} $m$,
\begin{equation}
  \sL_{\vec{C}}^{\sss\text{EG}} \Defl
   \sum_{p,r=0}^\infty m^{\frac{p-r-1}2}
    \sum_{q=0}^3 \sL_{[p,q,r]}^{\sss\text{EG}}(\vec{C})
 \label{e:EG}
\end{equation}
is an infinitely large family of manifestly supersymmetric Lagrangians for the supermultiplet\eq{e:GKY}; since there are only eight fermions, fermionic monomials cannot be of order higher than 8, limiting $q$ as indicated.
 For quantum mechanics in general, and as well for supersymmetric quantum mechanics, concerns of renormalization do not limit the order of the Lagrangian as they do for field theory in higher-dimensional spacetimes. Also, owing to the standard mass-dimension of scalar fields being negative, dimensionless combinations such as $(\f_iF_m)$ and $(m\,\f_i\f_j)$ may occur in the Lagrangian to arbitrary nonnegative powers.
 Although dimensionless, the fermions may appear only in monomials of total degree up to eight owing to their anti-commutativity.

Of these, only the linear combination
\begin{equation}
 \begin{aligned}
   \sL_{[0,0,0]}^{\sss\text{EG}}(\vec{C}) &\Defl \C6{Q_3}\C1{Q_2}\C3{Q_1}(C^\hk \j_\hk)\\
    &~=C^1(\DDt\f_3{+}\Dt{F}_3) -C^2\DDt\f_2 +C^3\Dt{F}_1 -C^4\DDt\f_1
       -C^5\DDt\f_5 -C^6\DDt\f_4 -C^7\Dt{F}_2 +C^8\Dt{F}_3
 \end{aligned}
\end{equation}
is necessarily a trivial total time-derivative.
 The $\sL_{[1,0,0]}$ term in the infinite sum\eq{e:EG} contains the nontrivial $\sL_{\sss\vec{A}}$ in\eq{e:KE}; all other terms parametrize (nonlinear) self-interactions of the supermultiplet $(\f_i|\j_\hk|F_m)$.

For example, the simple choice
\begin{align}
 \C6{Q_3}&\C1{Q_2}\C3{Q_1}(\f_1^{~3}\j_4)\nn\\
 &= -6\j_1\j_2\j_3\j_4
    -6i\f_1\big[F_1(\j_1\j_2{+}\j_3\j_4)
               +\dt\f_2(\j_1\j_3{-}\j_2\j_4)
               +(\dt\f_3{+}F_3)(\j_1\j_4{+}\j_2\j_3)\big]\nn\\
 &~~+3i\f_1^{~2}(\j_1\dt\j_1+\j_2\dt\j_2+\j_3\dt\j_3+\j_4\dt\j_4)
    +3\f_1^{~2}\Big[\dt\f_1^{~2} +\dt\f_2^{~2} +(\dt\f_3{+}F_3)^2 +F_1^{~2}\big]
    \,\C5{-\vdt(\f_1^{~3}\dt\f_1)}
 \label{e:Add4}
\end{align}
already provides fairly nontrivial interactions between the component fields of the left-hand side portion (as depicted in Figure~\ref{f:A>B}) of the supermultiplet\eq{e:GKY}, involving a component field of the right-hand side only through the persistent appearance of the binomial $(\dt\f_3{+}F_3)$. This is the only mixing with the component fields of the right-hand half (as depicted in Figure~\ref{f:A>B}) of the supermultiplet\eq{e:GKY}. Adding an $M$-multiple of\eq{e:Add4} to the standard kinetic terms\eq{e:StdKE} produces the equations of motion for the auxiliary fields
\begin{subequations}
\begin{alignat}9
  \d_{F_1}\sL=0&:&\quad
            F_1&\to 6M\frac{i\f_1(\j_1\j_2{+}\j_3\j_4)}{1+6M\f_1^{~2}},\\
  \text{and}\quad
  \d_{F_3}\sL=0&:&\quad
           F_3 &\to\frac{6M[i\f_1(\j_1\j_2{+}\j_3\j_4)-\f_1\dt\f_3]-\dt\f_3}
                        {2(1+3M\f_1^{~2})}.
\end{alignat}
\end{subequations}
Substituting this back into the Lagrangian\eq{e:StdKE}+$M{\cdot}$(\ref{e:Add4}) and expanding the numerators into geometric series clearly induces highly nontrivial and nonlinear dynamics.

Addionally, the construction of Lemma~\ref{T:Z} that produced $\sL^{\sss\text{SZ}}_{\sss\vec{B}}$ may also be generalized by selecting analytic bosonic ``superpotentials''
\begin{equation}
 \begin{aligned}
   z(\f,\j,F) &\!\Defl Z^{i_1\cdots i_p\,\hk_1\cdots\hk_{2q}\,m_1\cdots m_r}\,
                      \f_{i_1}\cdots\f_{i_p}\,
                       \j_{\hk_1}\cdots\j_{\hk_{2q}}\,
                        F_{m_1}\cdots F_{m_r},\\
 \text{such that}~~
 \C6{Q_3}\C1{Q_2}\C3{Q_1}\,z(\f,\j,F)&\simeq 0~\text{(mod $\vdt\cK$)},\qquad
 \cK=\cK(\f,\j,F)~\text{analytic},
 \end{aligned}
 \label{e:ZEc}
\end{equation}
with $q=0,\cdots4$, but $p,r\in\IN$.
This provides even more supersymmetric linear combinations of interactive terms, spanned by the linearly independent of the expressions
\begin{equation}
  \C3{Q_1}\C1{Q_2}\big[z(\f,\j,F)\big],\quad
  \C3{Q_1}\C6{Q_3}\big[z(\f,\j,F)\big]\quad \text{and}\quad
  \C1{Q_2}\C6{Q_3}\big[z(\f,\j,F)\big].
 \label{e:ZEg}
\end{equation}
We redefine the non-linear Lagrangian\eq{e:EG} so as to include also these terms.

The supersymmetric Lagrangian summands of the form\eq{e:EG} are formally analogous to the so-called ``$D$-terms'' within standard constructions with simple supersymmetry in  4-dimensional spacetime\cite{r1001,rPW,rWB,rBK}. In turn, the supersymmetric Lagrangian terms of the form\eq{e:ZEg} subject to\eq{e:ZEc}, are the formal analogues of the so-called (superpotential) ``$F$-terms''\cite{r1001,rPW,rWB,rBK}: There, the Lagrangian terms are obtained as iterated $Q$-transforms of a superpotential by a subset (one half) of the supercharges, such that the superpotential is annihilated (up to total $\t$-derivatives) by the complementary subset of the supercharges. Analogously, the generalized super-Zeeman Lagrangian terms are here obtained as $Q_IQ_J$-transforms of a ``superpotential,'' requiring that the full (triple) $Q$-transform of this ``superpotential'' is a total $\t$-derivative.

\section{Supersymmetry Extension}
\label{s:SuSyX}
As generally expected of representations of worldline $N\,{=}\,3$ supersymmetry, the supermultiplet\eq{e:GKY} indeed does admit an additional, $4^\text{th}$ supersymmetry---and in at least two distinct ways.

\paragraph{The Chiral-Chiral Extension:}
We simply list the additional supersymmetry transformation rules by extending the tables\eq{e:GKY}:
\begin{equation}
  \begin{array}{@{} c|c@{~~}c@{~~}c@{~~}c @{}}
 & \C3{Q_1} & \C1{Q_2} & \C6{Q_3} & \C8{Q_4} \\ 
    \hline\rule{0pt}{2.1ex}
\f_1
 & \j_1 & \j_2 & \j_3 & \j_4 \\[0pt]
\f_2
 & \j_3 & -\j_4 & -\j_1 & \j_2 \\[0pt]
\f_3
 & \j_4\C5{{-}\j_7} & \j_3\C5{{-}\j_5}
   & -\j_2\C5{{+}\j_6} & -\j_1\C5{{+}\j_8} \\[-2pt]
\f_4
 & \j_5 & -\j_7 & -\j_8 & \j_6 \\[0pt]
\f_5
 & -\j_6 & \j_8 & -\j_7 & \j_5 \\*[3pt]
    \cline{2-5}\rule{0pt}{2.5ex}
F_1
 & \dt\j_2 & -\dt\j_1 & \dt\j_4 & -\dt\j_3 \\[0pt]
F_2
 & \dt\j_8 & \dt\j_6 & \dt\j_5 & \dt\j_7 \\[0pt]
F_3
 & \dt\j_7 & \dt\j_5 & -\dt\j_6 & -\dt\j_8 \\
    \hline
  \end{array}
 \quad~~
  \begin{array}{@{} c|c@{~}c@{~}c@{~}c @{}}
 & \C3{Q_1} & \C1{Q_2} & \C6{Q_3} & \C8{Q_4} \\ 
    \hline\rule{0pt}{2.9ex}
\j_1
 & i\dt\f_1 & -iF_1 & -i\dt\f_2 & -i\dt\f_3\C5{{-}iF_3} \\[0pt]
\j_2
 & iF_1 & i\dt\f_1 & -i\dt\f_3\C5{{-}iF_3} & i\dt\f_2 \\[0pt]
\j_3
 & i\dt\f_2 & i\dt\f_3\C5{{+}iF_3} & i\dt\f_1 & -iF_1 \\[0pt]
\j_4
 & i\dt\f_3\C5{{+}iF_3} & -i\dt\f_2 & iF_1  & i\dt\f_1\\[0pt]
\j_5
 & i\dt\f_4 & iF_3 & iF_2 & i\dt\f_5 \\[0pt]
\j_6
 & -i\dt\f_5 & iF_2 & -iF_3 & i\dt\f_4 \\[0pt]
\j_7
 & iF_3 & -i\dt\f_4 & -i\dt\f_5 & iF_2 \\[0pt]
\j_8
 & iF_2 & i\dt\f_5 & -i\dt\f_4 & -iF_3 \\[0pt]
    \hline
  \end{array}
  \label{e:GKY4}
\end{equation}
which may be depicted in the manner of Figure~\ref{f:A4B}.
\begin{figure}[ht]
\centering
 \begin{picture}(140,45)
   \put(0,0){\includegraphics[width=140mm]{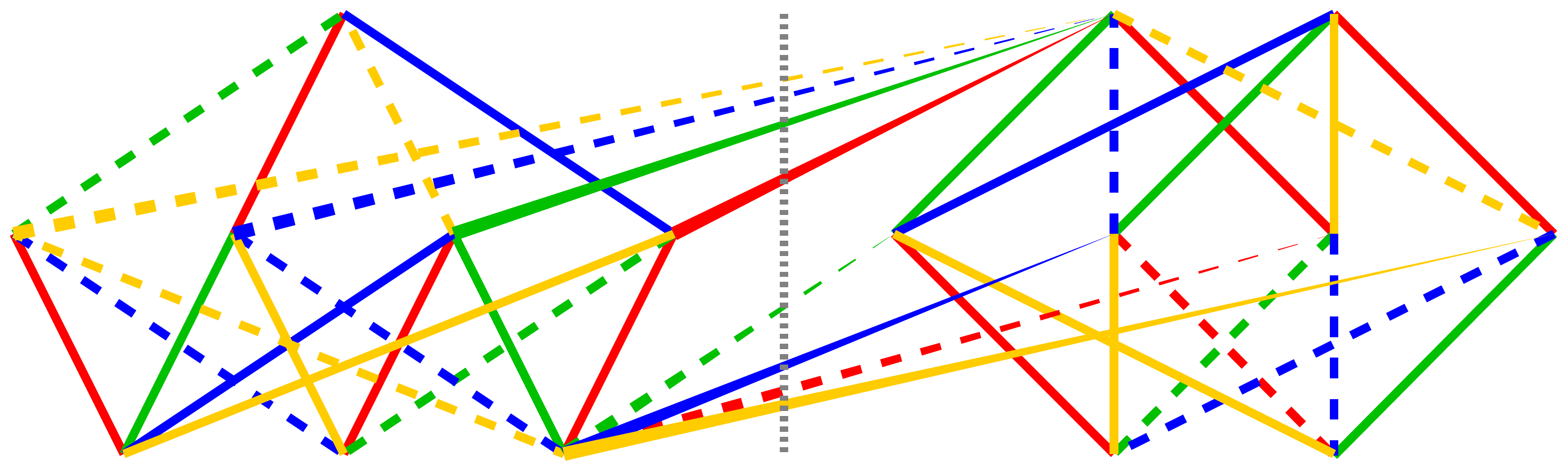}}
    \put(11,1){\cB{$\f_1$}}
    \put(31,1){\cB{$\f_2$}}
    \put(50.5,1){\cB{$\f_3$}}
    \put(99.5,1){\cB{$\f_4$}}
    \put(119,1){\cB{$\f_5$}}
    \put(2,20){\bB{$\j_1$}}
    \put(21,20){\bB{$\j_2$}}
    \put(40,20){\bB{$\j_3$}}
    \put(59,20){\bB{$\j_4$}}
    \put(80,20){\bB{$\j_5$}}
    \put(99,20){\bB{$\j_6$}}
    \put(118,20){\bB{$\j_7$}}
    \put(137,20){\bB{$\j_8$}}
    \put(31,40){\cB{$F_1$}}
    \put(99.5,40){\cB{$F_3$}}
    \put(119,40){\cB{$F_2$}}
 \end{picture}
\caption{A graphical depiction of the $N=4$ worldline supermultiplet\eq{e:GKY4}.}
 \label{f:A4B}
\end{figure}
This illustration makes it clear that the $4^\text{th}$ supersymmetry transformations has been implemented here so that the left- and the right-hand side halves in Figure~\ref{f:A4B} have the same, chiral {\em\/chromotopology\/}\ft{Chromotopology is the topology of the graph, taken together with the coloring of the nodes (boson/fermion), the coloring of the edges and their solidity/dashedness.}\cite{r6-3,r6-3.1}, and are then connected by the one-way $Q$-transformations depicted by the left-to-right, upward tapering edges---the only ones that connect the two halves. To verify this, one easily traces any 4-colored quadrangle that lies entirely within the left-hand half and another that is entirely within the right-hand half, multiplying factors of $+1$ for solid edges and $-1$ for dashed edges\cite{rH-TSS}. For both halves in Figure~\ref{f:A4B}, this product equals
\begin{equation}
  (-1) \times (-1)^{F(\text{start})}\,\ve(\C3{r},\C1{g},\C6{b},\C8{y}),
 \label{e:CP(N4)}
\end{equation}
where $(-1)^{F(\text{start})}=+1$ if we start from a boson and $-1$ if we start from a fermion, and $\ve(\C3{r},\C1{g},\C6{b},\C8{y})$ is the sign of the permutation of the colors of the followed edges as compared to the default \C3{red}-\C1{green}-\C6{blue}-\C8{yellow} order; see Ref.\cite{rH-TSS} for details and proof. In fact, the two halves differ one from another only in the mass-dimension of one of the bosonic nodes: upon raising the $\f_2$ node to the top level (mapping $\f_2\mapsto F_4\Defl\dt\f_2$) followed by the corresponding changes in the transformation rules\eq{e:GKY4} and then changing the sign $F_1\mapsto{-}F_1$, the left-hand side of the graph in Figure~\ref{f:A4B} becomes isomorphic to the upside-down image of the right-hand half.

Direct calculation with\eq{e:GKY4}, detailed in the first half of appendix~\ref{a:SuSyX}, shows that all of the kinetic and super-Zeeman Lagrangian terms in Tables~\ref{t:KE} and~\ref{t:SZ} are $\C8{Q_4}$-supersymmetric\eq{e:GKY4}. Therefore, at least the ``quadratic'' Lagrangians\eq{e:KE}+(\ref{e:SZ}) for the supermultiplet\eq{e:GKY} are in fact automatically $N\,{=}\,4$-supersymmetric.

\paragraph{The Chiral-Twisted-Chiral Extension:}
In distinction from the chiral-chiral extension\eq{e:GKY4}, the supersymmetry transformation rules\eq{e:GKY} may also be extended in a twisted fashion:
\begin{equation}
  \begin{array}{@{} c|c@{~~}c@{~~}c@{~~}c @{}}
 & \C3{Q_1} & \C1{Q_2} & \C6{Q_3} & \C8{\Tw{Q}_4} \\ 
    \hline\rule{0pt}{2.1ex}
\f_1
 & \j_1 & \j_2 & \j_3 & \j_4 \\[0pt]
\f_2
 & \j_3 & -\j_4 & -\j_1 & \j_2 \\[0pt]
\f_3
 & \j_4\C5{{-}\j_7} & \j_3\C5{{-}\j_5}
   & -\j_2\C5{{+}\j_6} & -\j_1\color{grey3}\fbox{$\C5{{-}\j_8}$} \\[-2pt]
\f_4
 & \j_5 & -\j_7 & -\j_8 & \cB{$-\j_6$} \\[0pt]
\f_5
 & -\j_6 & \j_8 & -\j_7 & \cB{$-\j_5$} \\*[3pt]
    \cline{2-5}\rule{0pt}{2.5ex}
F_1
 & \dt\j_2 & -\dt\j_1 & \dt\j_4 & -\dt\j_3 \\[0pt]
F_2
 & \dt\j_8 & \dt\j_6 & \dt\j_5 & \cB{$-\dt\j_7$} \\[0pt]
F_3
 & \dt\j_7 & \dt\j_5 & -\dt\j_6 & \cB{$\dt\j_8$} \\
    \hline
  \end{array}
 \quad~~
  \begin{array}{@{} c|c@{~}c@{~}c@{~}c @{}}
 & \C3{Q_1} & \C1{Q_2} & \C6{Q_3} & \C8{\Tw{Q}_4} \\ 
    \hline\rule{0pt}{2.9ex}
\j_1
 & i\dt\f_1 & -iF_1 & -i\dt\f_2 & -i\dt\f_3\C5{{-}iF_3} \\[0pt]
\j_2
 & iF_1 & i\dt\f_1 & -i\dt\f_3\C5{{-}iF_3} & i\dt\f_2 \\[0pt]
\j_3
 & i\dt\f_2 & i\dt\f_3\C5{{+}iF_3} & i\dt\f_1 & -iF_1 \\[0pt]
\j_4
 & i\dt\f_3\C5{{+}iF_3} & -i\dt\f_2 & iF_1  & i\dt\f_1\\[0pt]
\j_5
 & i\dt\f_4 & iF_3 & iF_2 & \cB{$-i\dt\f_5$} \\[0pt]
\j_6
 & -i\dt\f_5 & iF_2 & -iF_3 & \cB{$-i\dt\f_4$} \\[0pt]
\j_7
 & iF_3 & -i\dt\f_4 & -i\dt\f_5 & \cB{$-iF_2$} \\[0pt]
\j_8
 & iF_2 & i\dt\f_5 & -i\dt\f_4 & \cB{$iF_3$} \\[0pt]
    \hline
  \end{array}
  \label{e:GKY4-}
\end{equation}
which may be depicted in the manner of Figure~\ref{f:A4B}. The entries differing (only in sign) from the corresponding ones in\eq{e:GKY4} have been boxed.
\begin{figure}[ht]
\centering
 \begin{picture}(140,45)
   \put(0,0){\includegraphics[width=140mm]{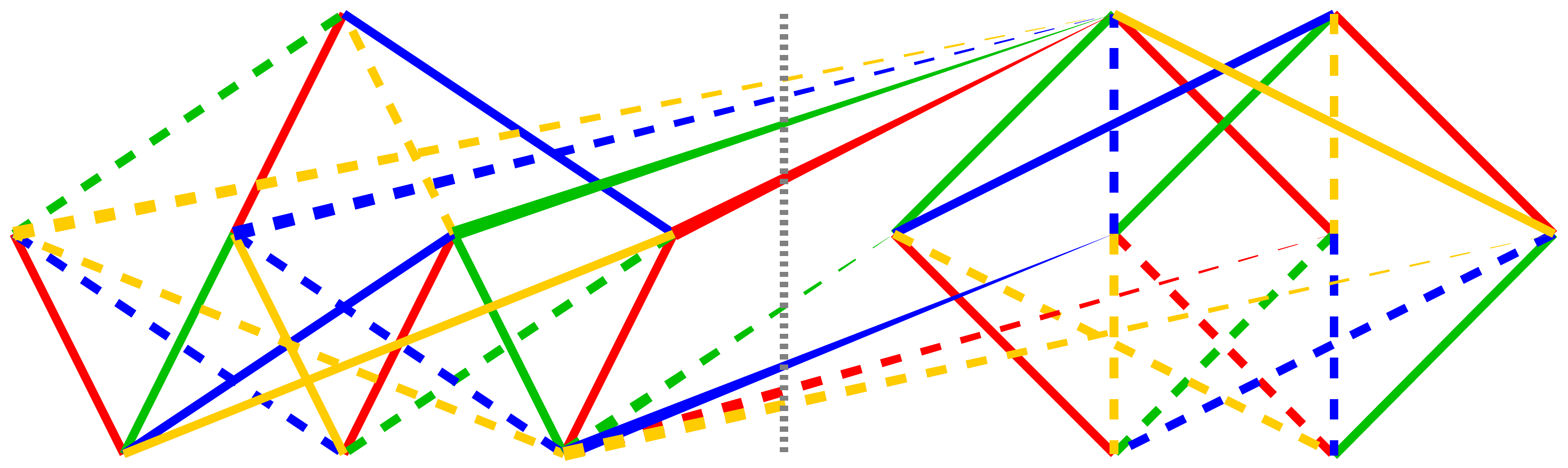}}
    \put(11,1){\cB{$\f_1$}}
    \put(31,1){\cB{$\f_2$}}
    \put(50.5,1){\cB{$\f_3$}}
    \put(99.5,1){\cB{$\f_4$}}
    \put(119,1){\cB{$\f_5$}}
    \put(2,20){\bB{$\j_1$}}
    \put(21,20){\bB{$\j_2$}}
    \put(40,20){\bB{$\j_3$}}
    \put(59,20){\bB{$\j_4$}}
    \put(80,20){\bB{$\j_5$}}
    \put(99,20){\bB{$\j_6$}}
    \put(118,20){\bB{$\j_7$}}
    \put(137,20){\bB{$\j_8$}}
    \put(31,40){\cB{$F_1$}}
    \put(99.5,40){\cB{$F_3$}}
    \put(119,40){\cB{$F_2$}}
 \end{picture}
\caption{A graphical depiction of the $N=4$ worldline supermultiplet\eq{e:GKY4-}.}
 \label{f:A4B-}
\end{figure}
This illustration makes it clear that the $4^\text{th}$ supersymmetry transformations can also be implemented so that the left- and the right-hand side halves in Figure~\ref{f:A4B} differ one from another, besides the mass-dimension of one of the bosonic node, also in chromotopology. We again verify this by tracing 4-colored quadrangles following the definition in Ref.\cite{rH-TSS}, and obtain for the two halves in Figure~\ref{f:A4B-}:
\begin{equation}
  (\mp1) \times (-1)^{F(\text{start})}\,\ve(\C3{r},\C1{g},\C6{b},\C8{y})\quad
  \text{for the}\,\Big\{\!\!\begin{tabular}{l} left-hand\\[-2pt]
                                               right-hand\end{tabular}\!\!\Big\}\,\text{half}.
 \label{e:CP(N4)t}
\end{equation}
 Whereas the chromotopology of the left-hand half is chiral, that of the right-hand half is twisted-chiral\cite{r6-3,r6-3.1}, and the two halves are connected by the one-way $Q$-transformations depicted by the left-to-right, upward tapering edges---the only ones that connect the two halves.

Direct calculation with\eq{e:GKY4-}, detailed in the second half of Appendix~\ref{a:SuSyX}, shows that the first two of the kinetic Lagrangian terms in Table~\ref{t:KE} and the first four super-Zeeman Lagrangian in Table~\ref{t:SZ} are also $\C8{\Tw{Q}_4}$-supersymmetric. However, the last four kinetic Lagrangian terms in Table~\ref{t:KE} and the last three super-Zeeman Lagrangian terms in Table~\ref{t:SZ} are not $\C8{\Tw{Q}_4}$-supersymmetric, and so present an obstruction to the twisted $\C8{\Tw{Q}_4}$-supersymmetry.

That is, the 6-parameter family of quadratic Lagrangians
\begin{equation}
  \sL^{\sss\text{KE}}_{\sss(A_1,A_2,0,0,0,0)}
  +\sL^{\sss\text{SZ}}_{\sss(B_1,B_2,B_3,B_4,0,0,0)}
\end{equation}
are both $\C8{Q_4}$- and $\C8{\Tw{Q}_4}$-supersymmetric. In turn, adding any combination of the $A_3,{\cdots},A_6$- and $B_5,B_6,B_7$-dependent terms obstructs the ``twisted'' $\C8{\Tw{Q}_4}$-supersymmetry\eq{e:GKY4}, but not the ``chiral-chiral'' $\C8{Q_4}$-supersymmetry\eq{e:GKY4-}, and the general bilinear Lagrangian\eq{e:KE}+(\ref{e:SZ}) is indeed $N\,{=}\,4$ supersymmetric.

Finally, it is interesting to note that the {\em\/standard\/} kinetic term Lagrangians\eq{e:KE=L} and\eq{e:KE=R} may in fact be written as
\begin{equation}
 \begin{array}{rclcrcl}
 \sL_{\sss(1,0,0,0,0,0)}^{\sss\text{KE}}
  &=&\C8{Q_4}\C6{Q_3}\C1{Q_2}\C3{Q_1}\,\fc12\f_1^{~2}
  \qquad&\qquad\text{and}&\qquad
 \sL_{\sss(0,1,0,0,0,0)}^{\sss\text{KE}}
  &=&\C8{Q_4}\C6{Q_3}\C1{Q_2}\C3{Q_1}\,\fc12\f_4^{~2}\\
  &=&\C8{\Tw{Q}_4}\C6{Q_3}\C1{Q_2}\C3{Q_1}\,\fc12\f_1^{~2}
  \qquad&&\qquad
  &=&-\C8{\Tw{Q}_4}\C6{Q_3}\C1{Q_2}\C3{Q_1}\,\fc12\f_4^{~2}\\
 \end{array}
\end{equation}
making them manifestly $N\,{=}\,4$-supersymmetric. The remaining terms in Tables~\ref{t:KE} and~\ref{t:SZ} however do not have such simple representations.

\section{Closing Comments}
\label{s:Coda}
In Ref.\cite{r6-1}, we have proven that a semi-infinite iterative sequence of off-shell supermultiplet quotients produces ever larger indecomposable supermultiplets of worldline $N\,{\geqslant}\,3$ supersymmetry. Here, we constructed a 13-parameter family of Lagrangians\eq{e:KE}+(\ref{e:SZ}) bilinear in the component fields of the smallest one of these supermultiplets\eq{e:GKY}, and show that this family contains a continuum of unitary models. Although\eq{e:GKY} is a supermultiplet of $N\,{=}\,3$ supersymmetry by construction, its most general bilinear Lagrangian\eq{e:KE}+(\ref{e:SZ}) turns out to automatically admit a fourth supersymmetry. It remains an open question whether the more general Lagrangians\eq{e:EG} with an unlimited number of higher order interaction terms also admit a fourth supersymmetry.

The 13-parameter family of bilinear Lagrangians\eq{e:KE}+(\ref{e:SZ}) involve a 7-parameter subset of terms that represent linear couplings of external magnetic fluxes to angular momenta corresponding to rotations in planes within the $(\f_1,\cdots,\f_5)$-space. Of these terms, each one of the first five terms in Table~\ref{t:SZ} results in a simple harmonic, helicoidal motion such as\eq{e:Z3}. However, each one of the last two terms in Table~\ref{t:SZ} results generically in a chaotic, aperiodic motion such as\eq{e:f1f3}+(\ref{e:f4f2f5})---the normal modes in this regime have incommensurate frequencies differing by powers of the Golden Ratio\eq{e:2freqs}. The 13-parameter family then clearly contains at least these two phases with radically different oscillatory regimes.

The present analysis focuses on the classical behavior, but in doing so defines the Lagran\-gi\-ans\eq{e:KE}+(\ref{e:SZ})+(\ref{e:EG})+(\ref{e:ZEg}). Owing to the fact that the supermultiplet\eq{e:GKY} is fully off-shell\cite{rTHGK12}, supersymmetry closes on the each component field of the supermultiplet $\F\Defl(\f_i|\j_\hk|F_m)$ unconditionally and it is straightforward to construct the partition functional
\begin{equation}
  Z[\ha\F;\vec{A},\vec{B},\vec{C}\,]
   \Defl \pmb\int\text{\bsf D[$\F$]}~
          \exp\Big\{ \int\rd\t~\Big(\sL_{\sss\vec{A}}^{\sss\text{KE}}
                                  {+}\sL_{\sss\vec{B}}^{\sss\text{SZ}}
                                   {+}\sL_{\sss\vec{C}}^{\sss\text{EG}}
                                     + \ha\F{\cdot}\F\Big)\Big\},
\end{equation}
where $\ha\F\Defl(\ha\f^i|\ha\j^\hk|\ha{F}^m)$ is a corresponding supermultiplet of probing sources. The methods used in Sections~\ref{s:SSL} and~\ref{s:MSI} can equally well provide interactions of the supermultiplet\eq{e:GKY} with other off-shell supermultiplets\cite{rKT07,r6-3.1}; this we defer to a subsequent effort. For illustration, suffice it here to note that the Hamiltonian for the $(\f_1,\f_3)$ bosonic system\eq{e:L6f} may be written, after integration by parts as in $\fc12B_6\f_1\dt\f_3\simeq\fc14B_6(\f_1\dt\f_3{-}\dt\f_1\f_3)$:
\begin{equation}
  \begin{aligned}
    \mathscr{H}_{\sss13}
   &= \inv2\big(p_1{+}\inv4B_6\f_3\big)^2 + \big(p_3{-}\inv4B_6\f_1\big)^2
      + \inv4B_6^{~2}\f_1^{~2} + \inv2B_6^{~2}\f_3^{~2},\\
   &= \inv2 p_{\sss13}^{~2}
       +\inv2\big(\frc{\ell_{\sss13}}{\vf_{13}}
            {-}\frc{B_6}{2\sqrt2}\vf_{\sss13}\big)^2
         +\inv8B_6^{~2}\big(5{-}3\cos(2\vq_{\sss13})\big)\vf_{\sss13}^{~2},
  \end{aligned}
 \label{e:KESZEG}
\end{equation}
where $\f_1=\vf_{\sss13}\cos(\vq_{\sss13})$ and $\f_3=\sqrt2\vf_{\sss13}\sin(\vq_{\sss13})$, and $p_{\sss13}$ and $\ell_{\sss13}$ are the radial  and angular momenta canonically conjugate to $\vf_{\sss13}$ and $\vq_{\sss13}$, respectively. This clearly indicates the nontrivial dynamics within even this very simple subsystem of\eq{e:KESZ}+(\ref{e:EG}), and motivates the more elaborate path integral approach in\eq{e:KESZEG}.

\bigskip
\paragraph{\bfseries Acknowledgments:}
 TH is grateful to
 the Department of Physics, University of Central Florida, Orlando FL,
 and
 the Physics Department of the Faculty of Natural Sciences of the University of Novi Sad, Serbia, 
 for the recurring hospitality and resources.
 GK is grateful to
 the Department of Physics and Astronomy, Howard University, Washington DC, 
 for the hospitality and resources.

\appendix
\section{Technical Details}
\label{s:Bits}
The following technical details have been deferred from the main narrative for clarity.

\subsection{Kinetic Terms}
\label{a:KE}
The computation of the kinetic Lagrangian terms\eq{e:L0} proceeds by straightforward use of the transformations specified in Table~\ref{e:GKY},  For example:
\begin{equation}
 \C3{Q_1}\C1{Q_2}\C6{Q_3}(\f_2\j_2)
  \simeq-\big[\, \dt\f_1^{~2} + \dt\f_2^{~2} +(\dt\f_3+F_3)^2 + F_1^{~2}
               + i\j_1\dt\j_1 + i\j_2\dt\j_2 + i\j_3\dt\j_3 + i\j_4\dt\j_4\,\big],
 \label{e:KE11}
\end{equation}
which appears in the first row of Table~\ref{t:KE}. The relation ``$\simeq$'' denotes equality up to total time-derivatives, but we have {\em\/not\/} made use of related (anti)symmetrizations such as
\begin{equation}
\begin{aligned}
   i\j_2\dt\j_2
   &= \frc{i}2(\j_2\,\dt\j_2-\dt\j_2\,\j_2) + \frc{i}2(\j_2\,\dt\j_2+\dt\j_2\,\j_2)
    = \frc{i}2(\j_2\,\dt\j_2-\dt\j_2\,\j_2) + i\vdt(\j_2\,\j_2~=~0)\\
   &= \frc{i}2(\j_2\,\dt\j_2-\dt\j_2\,\j_2),\quad\etc
\end{aligned}
\end{equation}
The terms\eq{e:KE11} clearly form one specific subset of the terms in the Lagrangian\eq{e:KE}.

The forty manifestly supersymmetric terms $\C6{Q_3}\C1{Q_2}\C3{Q_1}(\f_i\j_\hk)$ are listed in Table~\ref{t:KE0}, expanded and reduced modulo total time-derivatives.
\begin{table}[htpb]
\small
$$
  \begin{array}{@{} r@{:~}l @{}}
\bs{\f_i\j_\hk}&\text{\bsf The Kinetic Lagrangian Terms \boldmath
    $Q^3(\f_i\j_\hk)\Defl\C6{Q_3}\C1{Q_2}\C3{Q_1}(\f_i\j_\hk)$}
 \\[1pt]
    \toprule
\f_1 \j_1&
 \C5{\vdt\big[ \f_1(F_3{+}\dt\f_3) - i\j_2 \j_3 \big]}
 \\
\f_1 \j_2&
\C5{-\vdt\big[ \f_1 \dt\f_2 - i\j_1\j_3 \big]}
 \\
\f_1 \j_3& 
\C5{\vdt\big[ F_1 \f_1 - i\j_1 \j_2 \big]}
 \\
\f_1 \j_4& 
   F_1^{~2}
 + (F_3{+}\dt\f_3)^2
 + \dt\f_1 \dt\f_1  
 + \dt\f_2 \dt\f_2 
 + i\j_1 \dt\j_1 
 + i\j_2 \dt\j_2 
 + i\j_3 \dt\j_3 
 + i\j_4 \dt\j_4
\,\C5{-\vdt[\f_1\dt\f_1
 \big]}
 \\
\f_1 \j_5& 
   F_1 F_2
 - F_3 (\dt\f_2{-}\dt\f_4)
 + \dt\f_1 \dt\f_5  
 + \dt\f_3 \dt\f_4 
 - i\j_1 \dt\j_6 
 + i\j_2 \dt\j_8 
 - i\j_3 \dt\j_7 
 + i\j_4 \dt\j_5 
\,\C5{-\vdt[
   \f_1 \dt\f_5
 + i\j_4\j_5
 \big]}
 \\
\f_1 \j_6&
 - F_2 \dt\f_2
 - F_3 (F_1{+}\dt\f_5)
 + \dt\f_1 \dt\f_4 
 - \dt\f_3 \dt\f_5 
 + i\j_1 \dt\j_5 
 - i\j_2 \dt\j_7 
 - i\j_3 \dt\j_8 
 + i\j_4 \dt\j_6 
\,\C5{-\vdt[
   \f_1\dt\f_4 
 + i\j_4\j_6
 \big]}
 \\
\f_1 \j_7&
 - F_1 \dt\f_5 
 + F_2 \dt\f_1
 + F_3(F_3{+}\dt\f_3)
 + \dt\f_2 \dt\f_4 
 + i\j_1 \dt\j_8 
 + i\j_2 \dt\j_6 
 + i\j_3 \dt\j_5 
 + i\j_4 \dt\j_7 
\,\C5{-\vdt[
   F_2\f_1
 + i\j_4\j_7
 \big]}
\\
\f_1 \j_8&
 - F_1 \dt\f_4 
 + F_2(F_3{+}\dt\f_3)
 - F_3 \dt\f_1
 - \dt\f_2 \dt\f_5 
 - i\j_1 \dt\j_7 
 - i\j_2 \dt\j_5 
 + i\j_3 \dt\j_6 
 + i\j_4 \dt\j_8
\,\C5{+\vdt[
   F_3\f_1
 - i\j_4\j_8
 \big]}
 \\ \hline\noalign{\vglue1pt}
\f_2\j_1 & 
\,\C5{\vdt\big[
   F_3\f_2
 + \f_2\dt\f_3  
 - i\j_3\j_4 
 \big]}
 \\
\f_2\j_2 &
~=Q^3(\f_1\j_4)
\,\C5{+\vdt\big[
   \f_1 \dt\f_1
 - \f_2 \dt\f_2
 \big]}
 \\
\f_2\j_3 &
 \,\C5{\vdt\big[
   F_1\f_2 
 + i\j_1\j_4 
 \big]}
 \\
\f_2\j_4 &
\,\C5{-\vdt\big[
   \dt\f_1 \f_2
 + i\j_1\j_3 
 \big]}
 \\
\f_2\j_5 &
~=-Q^3(\f_1\j_8)
\,\C5{+\vdt\big[
   F_3\f_1
 - \f_2 \dt\f_5  
 - i\j_2 \j_5 
 - i\j_4\j_8
 \big]}
 \\ 
\f_2\j_6 &
~=Q^3(\f_1\j_7)
\,\C5{+\vdt\big[
   F_2\f_1
 - \f_2 \dt\f_4  
 - i\j_2 \j_6 
 + i\j_4\j_7
 \big]}
 \\
\f_2\j_7 &
~=-Q^3(\f_1\j_6)
\,\C5{-\vdt\big[
   \f_1\dt\f_4 
 + F_2 \f_2
 + i\j_2 \j_7 
 + i\j_4\j_6
 \big]}
 \\
\f_2\j_8 &
~=Q^3(\f_1\j_5)
\,\C5{+\vdt\big[
   F_3 \f_2
 + \f_1 \dt\f_5
 - i\j_2 \j_8 
 + i\j_4\j_5
 \big]}
  \\ \hline\noalign{\vglue1pt}
\f_3 \j_1&
 - F_1(F_1{+}\dt\f_5)
 + F_2 \dt\f_1 
 - (F_3{+}\dt\f_3)\dt\f_3  
 - \dt\f_1 \dt\f_1 
 - \dt\f_2(\dt\f_2{-}\dt\f_4)
\,\C5{+\vdt\big[
   (F_3{+}\dt\f_3)\f_3
 - i\j_2 \j_6
 - i\j_3 \j_5
 - i\j_4 \j_7
  \big]}\\*[-1pt]
\omit&~~
 - i\j_1(\dt\j_1{-}\dt\j_8)
 - i\j_2(\dt\j_2{-}\dt\j_6)
 - i\j_3(\dt\j_3{-}\dt\j_5)
 - i\j_4(\dt\j_4{-}\dt\j_7)
\\
\f_3 \j_2&
~=Q^3(\f_1\j_5)
\,\C5{+\vdt\big[
   \f_1 \dt\f_5
 - \dt\f_2 \f_3
 + i\j_1 \j_6
 + i\j_3 \j_7
 \big]}
\\
\f_3 \j_3&
~=-Q^3(\f_1\j_6)
\,\C5{+\vdt\big[
   F_1\f_3 
 - \f_1\dt\f_4 
 + i\j_1\j_5
 + i\j_2(\j_4{-}\j_7)
 \big]}
\\
\f_3 \j_4&
~=Q^3(\f_1\j_8)
\,\C5{-\vdt\big[
   (F_3{-}\dt\f_3)\f_1
 + \vdt(\f_1\f_3)
 - i\j_1 \j_7
 + i\j_2(\j_3{-}\j_5)
 + i\j_3 \j_6 
 - i\j_4\j_8
 \big]}
\\
\f_3 \j_5&
~=-Q^3(\f_1\j_6)
\,\C5{-\vdt\big[
   \f_1\dt\f_4 
 + \f_3 \dt\f_5  
 - i\j_1 \dt\j_5
 + i(\j_4{+}\j_7)\j_6
 \big]}
\\
\f_3 \j_6&
~=Q^3(\f_1\j_5)
\,\C5{+\vdt\big[
   \f_1 \dt\f_5
 - \f_3 \dt\f_4  
 + i\j_1 \j_6
 + i(\j_4{-}\j_7)\j_5
 \big]}
\\
\f_3 \j_7&
~=Q^3(\f_1\j_8)
\,\C5{-\vdt\big[
   F_2 \f_3
 + F_3\f_1
 - i\j_1 \j_7
 - i\j_4\j_8
 + i\j_5 \j_6 
 \big]}
\\
\f_3 \j_8&
   F_1 \dt\f_5 
 + F_2(F_2{-}\dt\f_1)
 - F_3 \dt\f_3
 - \dt\f_2 \dt\f_4 
 + \dt\f_4 \dt\f_4 
 + \dt\f_5 \dt\f_5
\,\C5{+\vdt\big[
   F_3 \f_3
 + i\j_1 \j_8
  \big]}
\\*[-1pt]
\omit&~~
 - i(\j_1{-}\j_8)\dt\j_8
 - i(\j_2{-}\j_6)\dt\j_6 
 - i(\j_3{-}\j_5)\dt\j_5 
 - i(\j_4{-}\j_7)\dt\j_7 
\\[1pt] \hline\noalign{\vglue1pt}
\f_4 \j_1&
~=-Q^3(\f_1\j_5)
\,\C5{+\vdt\big[
   (F_3{+}\dt\f_3)\f_4
 - \f_1 \dt\f_5
 + i\j_2 \j_8
 - i\j_3 \j_7
 \big]}
\\
\f_4 \j_2&
~=Q^3(\f_1\j_7)
\,\C5{+\vdt\big[
   F_2\f_1
 - \f_4 \dt\f_2
 - i\dt\j_1 \j_8
 - i\j_3 \j_5
 \big]}
 \\
\f_4 \j_3&
~=Q^3(\f_1\j_8)
\,\C5{+\vdt\big[
   F_1 \f_4
 - F_3 \f_1
 + i\j_1 \j_7
 + i\j_2 \j_5
 \big]}
 \\
\f_4 \j_4&
~=Q^3(\f_1\j_6)
\,\C5{+\vdt\big[
   \f_1 \dt\f_4 
 - \dt\f_1 \f_4 
 - i\j_1 \j_5
 + i\j_2 \j_7
 + i\j_3 \j_8
 + i\j_4 \j_6
 \big]}
 \\
\f_4 \j_5&
\,\C5{-\vdt\big[
   \f_4 \dt\f_5  
 + i\j_7 \j_8 
  \big]}
\\
\f_4 \j_6&
   F_2^{~2}
 + F_3^{~2}
 + \dt\f_4 \dt\f_4  
 + \dt\f_5 \dt\f_5 
 + i\j_5 \dt\j_5 
 + i\j_6 \dt\j_6 
 + i\j_7 \dt\j_7 
 + i\j_8 \dt\j_8 
\,\C5{-\vdt\big[
   \f_4 \dt\f_4  
  \big]}
\\
\f_4 \j_7&
\,\C5{-\vdt\big[
   F_2 \f_4 
 - i\j_5\j_8 
  \big]}
\\
\f_4 \j_8&
\,\C5{\vdt\big[
   F_3 \f_4
 - i\j_5 \j_7 
  \big]}
\\ \hline\noalign{\vglue1pt}
\f_5 \j_1&
~=Q^3(\f_1\j_6)
\,\C5{+\vdt\big[
   F_3 \f_5
 + \f_1 \dt\f_4 
 + \dt\f_3 \f_5
 + i\j_2 \j_7
 + i\j_3 \j_8
 \big]}
\\
\f_5 \j_2&
~=-Q^3(\f_1\j_8)
\,\C5{+\vdt\big[
   F_3 \f_1
 - \dt\f_2 \f_5
 - i\j_1 \j_7
 + i\j_3 \j_6
 \big]}
\\
\f_5 \j_3&
~=Q^3(\f_1\j_7)
\,\C5{+\vdt\big[
   F_1 \f_5
 + F_2\f_1
 - i\j_1 \j_8
 - i\j_2 \j_6
 \big]}
 \\
\f_5 \j_4&
~=Q^3(\f_1\j_5)
\,\C5{+\vdt\big[
 + \f_1 \dt\f_5
 - \dt\f_1 \f_5
 + i\j_1 \j_6
 - i\j_2 \j_8
 + i\j_3 \j_7
 + i\j_4\j_5
 \big]}
 \\
\f_5 \j_5&
~=Q^3(\f_4\j_6)
\,\C5{+\vdt\big[
   \f_4 \dt\f_4  
 + \f_5 \dt\f_5  
 \big]}
\\
\f_5 \j_6&
\,\C5{-\vdt\big[
   \dt\f_4 \f_5
 - i\j_7 \j_8 
  \big]}
\\
\f_5 \j_7&
\,\C5{-\vdt\big[
   F_2 \f_5
 + i\j_6 \j_8
  \big]}
 \\
\f_5 \j_8&
\,\C5{\vdt\big[
   F_3\f_5 
 + i\j_6 \j_7
  \big]}
\\
    \bottomrule
  \end{array}
$$\vspace{-7mm}
  \caption{The manifestly supersymmetric expressions $\C6{Q_3}\C1{Q_2}\C3{Q_1}(\f_i\j_\hk)$, with indicated total time derivatives}
  \label{t:KE0}
\end{table}
Several of the entries turn out to {\em\/be\/} total time-derivatives, and so produce a vanishing entry. To illustrate this, consider the case when $\j_\hk$ is in fact a supersymmetric partner of $\f_i$:
\begin{alignat}9
 \C6{Q_3}\C1{Q_2}\C3{Q_1}(\f_2\j_1)
 &=\C1{Q_2}\C3{Q_1}\C6{Q_3}\big[\f_2\,(-\C6{Q_3}\f_2)\big]
  =-\C1{Q_2}\C3{Q_1}\C6{Q_3}\big[\fc12\C6{Q_3}(\f_2^{\,2})\big]
  =-\frc{i}2\vdt[\C1{Q_2}\C3{Q_1}\,\f_2^{\,2}]~\simeq~0.
 \label{e:KEtriv}
\intertext{Furthermore, Table~\ref{t:KE0} contains many redundancies, owing to identities of the following type:}
 \C6{Q_3}\C1{Q_2}\C3{Q_1}(\f_2\j_6)
 &= \C6{Q_3}\C1{Q_2}\C3{Q_1}\big[\f_2\,(-\C3{Q_1}\f_5)\big]
  = \C6{Q_3}\C1{Q_2}\C3{Q_1}\big[-\C3{Q_1}(\f_2\,\f_5)+(\C3{Q_1}\f_2)\,\f_5\big],\nn\\
 &=-\C3{Q_1}^2\C6{Q_3}\C1{Q_2}\big[\f_2\,\f_5\big]
    +\C6{Q_3}\C1{Q_2}\C3{Q_1}\big[(\C3{Q_1}\f_2)\,\f_5\big],\nn\\
 &=-i\vdt\big[\C6{Q_3}\C1{Q_2}(\f_2\,\f_5)\big]
    +\C6{Q_3}\C1{Q_2}\C3{Q_1}\big[\j_3\,\f_5\big]
 ~\simeq~\C6{Q_3}\C1{Q_2}\C3{Q_1}\big[\f_5\j_3\big].
 \label{e:KErels}
\end{alignat}
These have been employed to reduce Table~\ref{t:KE0} in the appendix to Table~\ref{t:KE}, in the narrative.

There also exist more complicated relations, such as can be detected on expanding:
\begin{alignat}9
 \C6{Q_3}&\C1{Q_2}\C3{Q_1}\big[\f_2\j_2 - \f_4\j_6\big]\nn\\
 &= \Big[\big( F_1^{~2}
              +\dt\f_2\dt\f_2
              +(F_3{+}\dt\f_3)^2
              +\dt\f_1\dt\f_1 \big)
       +i\big( \j_2\dt\j_2 
              +\j_1\dt\j_1
              +\j_4\dt\j_4
              +\j_3\dt\j_3 \big)\Big] \nn\\
 &\qquad
   -\Big[\big( F_2^{~2}
              +F_3^{~2}
              +\dt\f_4\dt\f_4
              +\dt\f_5\dt\f_5 \big)
       +i\big( \j_6\dt\j_6 
              +\j_8\dt\j_8
              +\j_7\dt\j_7 
              +\j_5\dt\j_5 \big)\Big],
 \label{e:KE=L+R}\\
 \makebox[0pt][r]{and~}
 -\C6{Q_3}&\C1{Q_2}\C3{Q_1}\big[\f_3(\j_1{+}\j_8)\big]\nn\\
 &= \Big[\big( F_1^{~2}
              +\dt\f_2\dt\f_2
              +(F_3{+}\dt\f_3)^2
              +\dt\f_1\dt\f_1 \big)
       +i\big( \j_2\dt\j_2 
              +\j_1\dt\j_1
              +\j_4\dt\j_4
              +\j_3\dt\j_3 \big)\Big] \nn\\
 &\qquad
   -\Big[\big( F_2^{~2}
              +F_3^{~2}
              +\dt\f_4\dt\f_4
              +\dt\f_5\dt\f_5 \big)
       +i\big( \j_6\dt\j_6 
              +\j_8\dt\j_8
              +\j_7\dt\j_7 
              +\j_5\dt\j_5 \big)\Big].
 \label{e:KE=L-R}
\end{alignat}
 The equality of the expansions\eqs{e:KE=L+R}{e:KE=L-R} exhibits the relation
\begin{equation}
  \C6{Q_3}\C1{Q_2}\C3{Q_1}\big[(\f_2\j_2{-}\f_4\j_6)+(\f_3\j_1{+}\f_3\j_8)\big] = 0,
 \label{e:Xr1}
\end{equation}
whereby we can drop for example the term $\C6{Q_3}\C1{Q_2}\C3{Q_1}(\f_3\j_1)$ as being linearly dependent upon the three other three, as indicated\eq{e:Xr1}.
 A similar identity is found by iteratively expanding:
\begin{alignat}9
 \C6{Q_3}&\C1{Q_2}\C3{Q_1}\big[\f_4\j_6 - \f_2\j_6 - \f_3\j_8\big]
 &=0.
\end{alignat}
Indeed, this last identity follows more simply from observing that
\begin{align}
  \C3{Q_1}&\C1{Q_2}\C6{Q_3}\big[\f_4\j_6-\f_2\j_6-\f_3\j_8\big]
   =\C6{Q_3}\C1{Q_2}\C3{Q_1}
     \big[\f_4(-\C3{Q_1}\f_5)-\f_2(-\C3{Q_1}\f_5)-\f_3(\C1{Q_2}\f_5)\big],\nn\\
  &\simeq\C6{Q_3}\C1{Q_2}\C3{Q_1}
     \big[(\C3{Q_1}\f_4)\f_5-(\C3{Q_1}\f_2)\f_5+(\C1{Q_2}\f_3)\f_5\big],\nn\\
  &\quad=\C6{Q_3}\C1{Q_2}\C3{Q_1}
     \big[\big(\j_5-\j_3+(\j_3-\j_5)\big)\f_5\big]
   =\C6{Q_3}\C1{Q_2}\C3{Q_1}
     \big[(\j_5-\j_3+\j_3-\j_5)\f_5\big]=0.
 \label{e:Xr2}
\end{align}
As before, ``$\simeq$'' denotes equality up to omitted total time-derivatives.
 This then allows dropping also $\C6{Q_3}\C1{Q_2}\C3{Q_1}(\f_3\j_8)$, it being a linear combination of $\C6{Q_3}\C1{Q_2}\C3{Q_1}(\f_2\j_6)$ and $\C6{Q_3}\C1{Q_2}\C3{Q_1}(\f_4\j_6)$.

These identities leave the final collection of kinetic terms as listed in Table~\ref{t:KE}.

\subsection{Super-Zeeman Terms}
\label{a:ZE}
The approach followed in Section~\ref{s:ZE} requires the 15 expressions $\C6{Q_3}\C1{Q_2}\C3{Q_1}(\f_i\f_k)$, presented in Table~\ref{t:ZE0}.
\begin{table}[htpb]
$$
  \begin{array}{@{} r@{:~}l @{}}
\bs{\f_i\f_k}&\text{\bsf\boldmath Expression
    $Q^3(\f_i\f_k)\Defl\C6{Q_3}\C1{Q_2}\C3{Q_1}(\f_i\f_k)$}
 \\[1pt]
    \toprule
\frc12\f_1\f_1
&
 -iF_1\j_3 
 -iF_3\j_1 
 +i\dt\f_1\j_4 
 +i\dt\f_2\j_2
 -i\dt\f_3\j_1 
\\
\f_1\f_2
&
\C5{-i\vdt(\f_1\j_2+\f_2\j_4)}
\\
\f_1\f_3
&
 -iF_1\j_6 
 -iF_2\j_1
 -iF_3(\j_4{-}\j_7)
 +i\dt\f_1\j_8 
 -i\dt\f_2\j_5 
 +i\dt\f_3\j_7 
 +i\dt\f_4\j_3
 -i\dt\f_5\j_2
\\
\f_1\f_4
&
  iF_1\j_8 
 -iF_2\j_2
 +iF_3(\j_3{-}\j_5)
 +i\dt\f_1\j_6 
 -i\dt\f_2\j_7 
 -i\dt\f_3\j_5 
 +i\dt\f_4\j_4
 +i\dt\f_5\j_1
\\
\f_1\f_5
&
  iF_1\j_7 
 -iF_2\j_3
 -iF_3(\j_2{-}\j_6)
 +i\dt\f_1\j_5 
 +i\dt\f_2\j_8 
 +i\dt\f_3\j_6 
 -i\dt\f_4\j_1
 +i\dt\f_5\j_4
\\
\frc12\f_2\f_2
&
 -iF_1\j_3 
 -iF_3\j_1 
 +i\dt\f_1\j_4 
 +i\dt\f_2\j_2
 -i\dt\f_3\j_1
~\C5{=\frc12Q^3(\f_1\f_1)}
\\
\f_2\f_3
&
  iF_1\j_7 
 -iF_2\j_3
 -iF_3(\j_2{-}\j_6)
 +i\dt\f_1\j_5 
 +i\dt\f_2\j_8 
 +i\dt\f_3\j_6 
 -i\dt\f_4\j_1
 +i\dt\f_5\j_4
~\C5{=Q^3(\f_1\f_5)}
\\
\f_2\f_4
&
 -iF_1\j_5 
 +iF_2\j_4
 -iF_3(\j_1{+}\j_8)
 +i\dt\f_1\j_7 
 +i\dt\f_2\j_6 
 -i\dt\f_3\j_8 
 +i\dt\f_4\j_2
 +i\dt\f_5\j_3
\\
\f_2\f_5
&
  iF_1\j_6 
 +iF_2\j_1
 +iF_3(\j_4{-}\j_7)
 -i\dt\f_1\j_8 
 +i\dt\f_2\j_5 
 -i\dt\f_3\j_7 
 -i\dt\f_4\j_3
 +i\dt\f_5\j_2
~\C5{=-Q^3(\f_1\f_3)}
\\
\frc12\f_3\f_3
&
 -iF_1(\j_3{-}\j_5)
 -iF_2(\j_4{-}\j_7)
 \\[-1pt]
 \omit&~
 +i\dt\f_1(\j_4{-}\j_7)
 +i\dt\f_2(\j_2{-}\j_6)
 -i\dt\f_3(\j_1{-}\j_8)
 -i\dt\f_4(\j_2{-}\j_6)
 -i\dt\f_5(\j_3{-}\j_5)
\\
\f_3\f_4
&
  iF_1\j_7
 -iF_2\j_3
 -iF_3(\j_2{-}\j_6)
 +i\dt\f_1\j_5 
 +i\dt\f_2\j_8 
 +i\dt\f_3\j_6 
 -i\dt\f_4\j_1
 +i\dt\f_5\j_4
~\C5{=Q^3(\f_1\f_5)}
\\
\f_3\f_5
&
 -iF_1\j_8 
 +iF_2\j_2
 -iF_3(\j_3{-}\j_5)
 -i\dt\f_1\j_6 
 +i\dt\f_2\j_7 
 +i\dt\f_3\j_5 
 -i\dt\f_4\j_4 
 -i\dt\f_5\j_1
~\C5{=-Q^3(\f_1\f_4)}
\\
\frc12\f_4\f_4
&
  iF_2\j_7 
 -iF_3\j_8
 +i\dt\f_4\j_6 
 +i\dt\f_5\j_5
\\
\f_4\f_5
&
~\C5{-i\vdt(\f_4\j_5+\f_5\j_6)}
\\
\frc12\f_5\f_5
&
  iF_2\j_7 
 -iF_3\j_8 
 +i\dt\f_5\j_5 
 +i\dt\f_4\j_6 
~\C5{=\frc12Q^3(\f_4\f_4)}
\\
\bottomrule
  \end{array}
$$\vspace{-9mm}
  \caption{The $\C6{Q_3}\C1{Q_2}\C3{Q_1}$-transformation of the monomials $(\f_i\f_k)$, with indicated total time derivatives}
  \label{t:ZE0}
\end{table}
Of these 15 expressions, two already are total time derivatives and simple row operations were used to find seven linear combinations that are also total time derivatives; a suitable basis for these is shown in\eq{e:pZb}. Applying $Q_IQ_J$ on these nine combinations, we obtain the twenty-seven supersymmetric candidate Lagrangian summands listed in Tables~\ref{t:ZE01}--\ref{t:ZE03}, below. Inspection and a few simple row operations easily show that these reduce to the seven linearly independent super-Zeeman Lagrangian terms listed in Table~\ref{t:SZ}.
\begin{table}[htpb]
\small$$
  \begin{array}{@{} r@{:~}l @{}}
    \toprule
\fc12(\f_1\f_1{-}\f_2\f_2)
&
 -i\j_1\j_2
 -i\j_3\j_4
 +\f_1F_1
 +\f_2F_3
 +\f_2\dt\f_3
\\
\f_1\f_2
&
 +i\j_1\j_4
 +i\j_2\j_3
 -\f_1F_3
 +\f_2F_1
 -\f_1\dt\f_3
\\
\f_1\f_3{+}\f_2\f_5
&
 -i\j_1\j_3
 +i\j_1\j_5
 +i\j_2\j_4
 -i\j_2\j_7
 -i\j_3\j_8
 +i\j_4\j_6
 \\[-1pt]
\omit&~~
 +\f_2F_2
 +\f_3F_1
 -\f_5F_3
 +\f_1\dt\f_2
 -\f_1\dt\f_4
 +\f_3\dt\f_5
\\
\f_1\f_4{+}\f_3\f_5
&
 +i\j_1\j_7
 +i\j_2\j_5
 -i\j_3\j_6
 -i\j_4\j_8
 +i\j_5\j_6
 +i\j_7\j_8
 \\[-1pt]
\omit&~~
 -\f_1F_3
 +\f_3F_2
 +\f_4F_1
 -\f_2\dt\f_5
 +\f_4\dt\f_5
\\
\f_1\f_5{-}\f_2\f_3
&
 -i\j_1\j_8
 -i\j_2\j_6
 -i\j_3\j_5
 -i\j_4\j_7
 +\f_1F_2
 +\f_3F_3
 +\f_5F_1
 +\f_2\dt\f_4
\\
\f_1\f_5{-}\f_3\f_4
&
 -i\j_1\j_8
 -i\j_2\j_6
 -i\j_3\j_5
 -i\j_4\j_7
 +\f_1F_2
 +\f_3F_3
 +\f_5F_1
 +\f_2\dt\f_4
\\
\omit&~\C5{=i\C1{Q_2}\C3{Q_1}(\f_1\f_5{-}\f_2\f_3)}
\\
\fc12(\f_4\f_4{-}\f_5\f_5)
&
 +i\j_5\j_7
 -i\j_6\j_8
 -\f_4F_3
 -\f_5F_2
\\
\f_4\f_5
&
 -i\j_5\j_8
 -i\j_6\j_7
 +\f_4F_2
 -\f_5F_3
\\
\fc12(\f_1\f_1{-}\f_3\f_3{+}\f_5\f_5){-}\f_2\f_4
&
 -i\j_1\j_2
 -i\j_3\j_4
 +\f_1F_1
 +\f_2F_3
 +\f_2\dt\f_3
~-i\j_5\j_7
 +i\j_6\j_8
 +\f_4F_3
 +\f_5F_2
\\
\omit&
~\C5{=i\C1{Q_2}\C3{Q_1}\big[\fc12(\f_1^{~2}{-}\f_2^{~2}){-}\fc12(\f_4^{~2}{-}\f_5^{~2})\big]}
\\
\bottomrule
  \end{array}
$$\vspace{-9mm}
  \caption{Candidate Lagrangian summands $i\C1{Q_2}\C3{Q_1}\,Z(\f_*)$}
  \label{t:ZE01}
\end{table}
\begin{table}[htpb]
\small$$
  \begin{array}{@{} r@{:~}l @{}}
    \toprule
\fc12(\f_1\f_1{-}\f_2\f_2)
&
~\C5{i\vdt(\f_1\f_2)}
\\
\f_1\f_2
&
~\C5{\frc{i}2\vdt(\f_2^{~2}{-}\f_1^{~2})}
\\
\f_1\f_3{+}\f_2\f_5
&
 +i\j_1\j_2
 +i\j_3\j_4
 -\f_1F_1
 -\f_2F_3
 -\f_2\dt\f_3
\\
\f_1\f_4 +\f_3\f_5
&
 +i\j_1\j_8
 +i\j_2\j_6
 +i\j_3\j_5
 +i\j_4\j_7
 -\f_1F_2
 -\f_3F_3
 -\f_5F_1
 -\f_2\dt\f_4
\\
\f_1\f_5{-}\f_2\f_3
&
 +i\j_1\j_4
 +i\j_2\j_3
 -\f_1F_3
 +\f_2F_1
 -\f_1\dt\f_3
\\
\f_1\f_5{-}\f_3\f_4
&
 +i\j_1\j_7
 +i\j_2\j_5
 -i\j_3\j_6
 -i\j_4\j_8
 +i\j_5\j_6
 +i\j_7\j_8
 \\[-1pt]
\omit&~~
 -\f_1F_3
 +\f_3F_2
 +\f_4F_1
 -\f_2\dt\f_5
 +\f_4\dt\f_5
\\
\fc12(\f_4\f_4{-}\f_5\f_5)
&
 +i\j_5\j_8
 +i\j_6\j_7
 -\f_4F_2
 +\f_5F_3
\\
\f_4\f_5
&
 +i\j_5\j_7
 -i\j_6\j_8
 -\f_4F_3
 -\f_5F_2
\\
\fc12(\f_1\f_1{-}\f_3\f_3{+}\f_5\f_5){-}\f_2\f_4
&
 -i\j_1\j_3
 +i\j_1\j_5
 +i\j_2\j_4
 -i\j_2\j_7
 -i\j_3\j_8
 +i\j_4\j_6
 \\[-1pt]
\omit&~~
 +\f_2F_2
 +\f_3F_1
 -\f_5F_3
 +\f_1\dt\f_2
 -\f_1\dt\f_4
 +\f_3\dt\f_5
\\
\bottomrule
  \end{array}
$$\vspace{-7mm}
  \caption{Candidate Lagrangian summands $i\C6{Q_3}\C3{Q_1}\,Z(\f_*)$}
  \label{t:ZE02}
\end{table}
\begin{table}[htpb]
\small$$
  \begin{array}{@{} r@{:~}l @{}}
    \toprule
\fc12(\f_1\f_1{-}\f_2\f_2)
&
 -i\j_1\j_4
 -i\j_2\j_3
 +\f_1F_3
 -\f_2F_1
 +\f_1\dt\f_3
\\
\f_1\f_2
&
 -i\j_1\j_2
 -i\j_3\j_4
 +\f_1F_1
 +\f_2F_3
 +\f_2\dt\f_3
\\
\f_1\f_3{+}\f_2\f_5
&
 -i\j_1\j_8
 -i\j_2\j_6
 -i\j_3\j_5
 -i\j_4\j_7
 +\f_1F_2
 +\f_3F_3
 +\f_5F_1
 +\f_2\dt\f_4
\\
\f_1\f_4{+}\f_3\f_5
&
 -i\j_5\j_7
 +i\j_6\j_8
 +\f_4F_3
 +\f_5F_2
\\
\f_1\f_5{-}\f_2\f_3
&
 +i\j_1\j_3
 -i\j_1\j_5
 -i\j_2\j_4
 +i\j_2\j_7
 +i\j_3\j_8
 -i\j_4\j_6
 \\[-1pt]
\omit&~~
 -\f_2F_2
 -\f_3F_1
 +\f_5F_3
 -\f_1\dt\f_2
 +\f_1\dt\f_4
 -\f_3\dt\f_5
\\
\f_1\f_5{-}\f_3\f_4
&
 +i\j_5\j_8
 +i\j_6\j_7
 -\f_4F_2
 +\f_5F_3
\\
\f_4\f_4{-}\f_5\f_5
&
~\C5{-2i\vdt(\f_4\f_5)}
\\
\f_4\f_5
&
~\C5{\frc{i}2\vdt(\f_4^{~2}{-}\f_5^{~2})}
\\
~~\fc12(\f_1\f_1{-}\f_3\f_3{+}\f_5\f_5){-}2 \f_2\f_4
&
 -i\j_1\j_7
 -i\j_2\j_5
 +i\j_3\j_6
 +i\j_4\j_8
 -i\j_5\j_6
 -i\j_7\j_8
 \\[-1pt]
\omit&~~
 +\f_1F_3
 -\f_3F_2
 -\f_4F_1
 +\f_2\dt\f_5
 -\f_4\dt\f_5
\\
\bottomrule
  \end{array}
$$\vspace{-7mm}
  \caption{Candidate Lagrangian summands $i\C6{Q_3}\C1{Q_2}\,Z(\f_*)$}
  \label{t:ZE03}
\end{table}

\subsection{Supersymmetry Extensions}
\label{a:SuSyX}
The $\C8{Q_4}$- and $\C8{\Tw{Q}_4}$-transform of the $N\,{=}\,3$-supersymmetric Lagrangian terms in Tables~\ref{t:KE} and~\ref{t:SZ} are computed straightforwardly and listed below in turn.

\paragraph{The Chiral-Chiral Extension:}
Using the chiral-chiral $\C8{Q_4}$-supersymmetry transformation rules\eq{e:GKY4}, we compute:
\begin{alignat}9
 \C8{Q_4}K^1
 &=\C5{-\vdt\big(F_1\j_3 +(F_3{+}\dt\f_3)\j_1 -\dt\f_1\j_4 -\dt\f_2\j_2\big)}
 \label{e:Q4K1}
\\[2pt]
 \C8{Q_4}K^2
 &= \C5{\vdt\big(F_2\j_7 -F_3\j_8 +\dt\f_4\j_6 +\dt\f_5\j_5\big)}
 \label{e:Q4K2}
\\[2pt]
 \C8{Q_4}K^3
 &=\C5{-\vdt\big(F_2\j_3 +F_3\j_2 +\dt\f_4\j_1 -\dt\f_5\j_4\big)}
 \label{e:Q4K3}
\\[2pt]
 \C8{Q_4}K^4
 &=\C5{\vdt\big(F_2\j_2 -F_3\j_3 -\dt\f_4\j_4 -\dt\f_5\j_1\big)}
 \label{e:Q4K4}
\\[2pt]
 \C8{Q_4}K^5
 &=\C5{\vdt\big(F_2\j_4 -F_3\j_1 +\dt\f_4\j_2 +\dt\f_5\j_3\big)}
 \label{e:Q4K5}
\\[2pt]
 \C8{Q_4}K^6
 &=\C5{-\vdt\big(F_2\j_1 +F_3\j_4 -\dt\f_4\j_3 +\dt\f_5\j_2\big)}
 \label{e:Q4K6}
\\[6pt]
 \C8{Q_4}Z^1
 &= \C5{\vd_t\big(\f_4\j_8-\f_5\j_7\big)}
 \label{e:Q4Z1}
\\[2pt]
 \C8{Q_4}Z^2
 &=\C5{-\vdt\big(\f_5\j_8 +\f_4\j_7\big)}
 \label{e:Q4Z2}
\\[2pt]
 \C8{Q_4}Z^3
 &= \C5{\vdt\big(\f_2\j_1 +\f_1\j_3\big)}
 \label{e:Q4Z3}
\\[2pt]
 \C8{Q_4}Z^4
 &= \C5{\vdt\big(\f_1\j_1 -\f_2\j_3\big)}
 \label{e:Q4Z4}
\\[2pt]
 \C8{Q_4}Z^5
 &=\C5{-\vdt\big(\dt\f_1\j_7 +\dt\f_2\j_6 -\dt\f_3\j_8 -\dt\f_5\j_3\big)}
 \label{e:Q4Z5}
\\[2pt]
 \C8{Q_4}Z^6
 &= \C5{\vdt\big(\f_1\j_8 -\f_2\j_5 +\f_3\j_7 -\f_4\j_3 +\f_4\j_5\big)}
 \label{e:Q4Z6}
\\[2pt]
 \C8{Q_4}Z^7
 &=-\C5{\vdt\big(\f_1\j_2 -\f_1\j_6 +\f_2\j_7 -\f_3\j_3 +\f_3\j_5 +\f_5\j_8\big)}
 \label{e:Q4Z7}
\end{alignat}
Thus, all thirteen Lagrangian terms are automatically also $\C8{Q_4}$-supersymmetric.

\paragraph{The Chiral-Twisted-Chiral Extension:}
Using the chiral-twisted-chiral $\C8{\Tw{Q}_4}$-supersymmetry transformation rules\eq{e:GKY4-}, we compute:
\begin{alignat}9
 \C8{\Tw{Q}_4}K^1
 &=\C5{-\vdt\big(F_1\j_3 +F_3\j_1
                  -\dt\f_1 \j_4 -\dt\f_2 \j_2 +\dt\f_3 \j_1 \big)}
 \label{e:Q4tK1}
\\[2pt]
 \C8{\Tw{Q}_4}K^2
&=\C5{-\vdt\big(F_2\j_7 -F_3\j_8 +\dt\f_4\j_6 +\dt\f_5\j_5\big)}
 \label{e:Q4tK2}
\\[2pt]
 \C8{\Tw{Q}_4}K^3
 &=
 -2\big(F_1 \dt\j_7
       +F_2 \dt\j_3
       +F_3(\dt\j_2+\dt\j_6)
       +\dt\f_1\dt\j_5
       +\dt\f_2 \dt\j_8
       +\dt\f_3\dt\j_6
       +\dt\f_4 \dt\j_1
       -\dt\f_5 \dt\j_4\big)\nn\\
 &\qquad
  \C5{+\vdt\big( F_2 \j_3 +F_3 \j_2 +\dt\f_4 \j_1 -\dt\f_5 \j_4\big)}
 \label{e:Q4tK3}
\\[2pt]
 \C8{\Tw{Q}_4}K^4
 &=
 +2\big( F_1 \dt\j_8 
        +F_2 \dt\j_2
        -F_3 (\dt\j_3{+}\dt\j_5)
        +\dt\f_1 \dt\j_6
        -\dt\f_2 \dt\j_7
        -\dt\f_3 \dt\j_5
        -\dt\f_4 \dt\j_4
        -\dt\f_5 \dt\j_1 \big)
 \nn\\
 &\qquad
  \C5{-\vdt\big( F_2\j_2 -F_3\j_3 -\dt\f_4\j_4 -\dt\f_5\j_1 \big)}
 \label{e:Q4tK4}
\\[2pt]
 \C8{\Tw{Q}_4}K^5
 &=
 +2\big( F_1 \dt\j_5
        +F_2 \dt\j_4
        -F_3(\dt\j_1{-}\dt\j_8)
        -\dt\f_1 \dt\j_7
        -\dt\f_2 \dt\j_6
        +\dt\f_3 \dt\j_8
        +\dt\f_4 \dt\j_2
        +\dt\f_5 \dt\j_3 \big)
 \nn\\
 &\qquad\C5{-\vdt\big( F_2 \j_4 -F_3 \j_1 +\dt\f_4 \j_2 +\dt\f_5 \j_3 \big)}
 \label{e:Q4tK5}
\\[2pt]
 \C8{\Tw{Q}_4}K^6
 &=
 +2\big( F_1 \dt\j_6
        -F_2 \dt\j_1
        -F_3 \dt\j_4
        -F_3 \dt\j_7
        -\dt\f_1 \dt\j_8
        +\dt\f_2 \dt\j_5
        -\dt\f_3 \dt\j_7
        +\dt\f_4 \dt\j_3
        -\dt\f_5 \dt\j_2 \big)
 \nn\\
 &\qquad\C5{+\vdt\big( F_2\j_1 +F_3\j_4 -\dt\f_4\j_3 +\dt\f_5\j_2 \big)}
 \label{e:Q4tK6}
\\[6pt]
 \C8{\Tw{Q}_4}Z^1
 &=\C5{-\vdt\big(\f_4\j_8-\f_5\j_7\big)}
 \label{e:Q4tZ1}
\\[2pt]
 \C8{\Tw{Q}_4}Z^2
 &=\C5{+\vdt\big(\f_4\j_7+\f_5\j_8\big)}
 \label{e:Q4tZ2}
\\[2pt]
 \C8{\Tw{Q}_4}Z^3
 &=\C5{+\vdt\big(\f_1\j_3+\f_2\j_1\big)}
 \label{e:Q4tZ3}
\\[2pt]
 \C8{\Tw{Q}_4}Z^4
 &=\C5{+\vdt\big(\f_1\j_1-\f_2\j_3\big)}
 \label{e:Q4tZ4}
\\[2pt]
 \C8{\Tw{Q}_4}Z^5
 &=
 +2\big( F_1 \j_5
        -F_2 \j_4
        +F_3(\j_1{+}\j_8)
        -\dt\f_1 \j_7
        -\dt\f_2 \j_6
        +\dt\f_3 \j_8
        -\dt\f_4 \j_2
        -\dt\f_5 \j_3 \big)
 \nn\\
 &\qquad\C5{+\vdt\big( +\f_1\j_7 +\f_2\j_6 -\f_3\j_8 +\f_5\j_3 \big)}
 \label{e:Q4tZ5}
\\[2pt]
 \C8{\Tw{Q}_4}Z^6
&=
 -2 \big( F_1 \j_6
         +F_2 \j_1
         +F_3(\j_4{-}\j_7)
         -\dt\f_1 \j_8
         +\dt\f_2 \j_5
         -\dt\f_3 \j_7
         -\dt\f_4 \j_3
         +\dt\f_5 \j_2 \big)
 \nn\\
 &\qquad\C5{-\vdt\big( \f_1\j_8 -\f_2\j_5 +\f_3\j_7 +\f_4\j_3 +\f_4\j_5 \big)}
 \label{e:Q4tZ6}
\\[2pt]
 \C8{\Tw{Q}_4}Z^7
&=
 +2 \big( F_1 \j_8
         -F_2 \j_2
         +F_3 \j_3
         -F_3 \j_5
         +\dt\f_1 \j_6
         -\dt\f_2 \j_7
         +\dt\f_3 \j_5
         +\dt\f_4 \j_4
         +\dt\f_5 \j_1 \big)
\nn\\
&\quad\C5{-\vdt\big(
  \f_1\j_2
 +\f_1\j_6
 -\f_2\j_7
 -\f_3\j_3
 -\f_3\j_5
 -\f_5\j_8
 \big)}
 \label{e:Q4tZ7}
\end{alignat}
Thus, only five Lagrangian terms ( $K^1,K^2,Z^1,Z^2,Z^3$ and $Z^4$) are also $\C8{\Tw{Q}_4}$-supersymmetric.

\bigskip

\end{document}